\newcommand{\norm}[1]{\left\lVert#1\right\rVert}
\newtheorem{theo}{Theorem}
\newtheorem{lem}{Lemma}
\theoremstyle{definition}
\begin{document}

\title{Decentralized Optimization with Distributed Features and Non-Smooth Objective Functions}

\author{Cristiano~Gratton, %~\IEEEmembership{Member,~IEEE,}
        Naveen~K.~D.~Venkategowda,~\IEEEmembership{Member,~IEEE,}
		Reza~Arablouei,
        and~Stefan~Werner,~\IEEEmembership{Senior~Member,~IEEE}
\thanks{This work was partly supported by the Research Council of Noway. A conference precursor of this work appears in the \emph{Proceedings of the European Signal Processing Conference}, Dublin, Ireland, August 2021~\cite{Gratton2021eusipco}.}	
\thanks{C. Gratton and S. Werner are with the Department of Electronic Systems, Norwegian University of Science and Technology, Trondheim, Norway (email:cristiano.gratton@ntnu.no; stefan.werner@ntnu.no).}
\thanks{N. K. D. Venkategowda is with the Department of Science and Technology, Linköping University, Norrköping, Sweden (email: naveen.venkategowda@liu.se)}
\thanks{R. Arablouei is with the Commonwealth Scientific and Industrial Research Organisation, Pullenvale QLD 4069, Australia (email:reza.arablouei@csiro.au).}}

\maketitle

\begin{abstract}
We develop a new consensus-based distributed algorithm for solving learning problems with feature partitioning and non-smooth convex objective functions. Such learning problems are not separable, i.e., the associated objective functions cannot be directly written as a summation of agent-specific objective functions. To overcome this challenge, we redefine the underlying optimization problem as a dual convex problem whose structure is suitable for distributed optimization using the alternating direction method of multipliers (ADMM). Next, we propose a new method to solve the minimization problem associated with the ADMM update step that does not rely on any conjugate function. Calculating the relevant conjugate functions may be hard or even unfeasible, especially when the objective function is non-smooth. To obviate computing any conjugate function, we solve the optimization problem associated with each ADMM iteration in the dual domain utilizing the block coordinate descent algorithm.
%, which constitutes the inner loop of the proposed algorithm.
%To obviate computing any conjugate function, we solve the minimization problem associated with each ADMM iteration in the dual domain as well. 
%Although the proposed algorithm consists of two nested loops, the computational complexity is slightly affected thanks to the robustness of the ADMM, which does not require the inner loop to be solved with high accuracy. 
Unlike the existing related algorithms, the proposed algorithm is fully distributed and does away with the conjugate of the objective function. We prove theoretically that the proposed algorithm attains the optimal centralized solution. We also confirm its network-wide convergence via simulations.
\end{abstract}

\begin{IEEEkeywords}
Alternating direction method of multipliers, distributed optimization, learning with feature partitioning.
\end{IEEEkeywords}

\IEEEpeerreviewmaketitle

\section{Introduction}

\IEEEPARstart{P}{erforming} data analytic tasks at a central processing unit in a distributed network can be infeasible due to the associated computing/communication costs or privacy issues. In addition, collecting all the data in a central hub creates a single point of failure. Therefore, it is necessary to develop algorithms that facilitate in-network processing and model learning using data collected by nodes/agents that are dispersed over a network \cite{Mingyihongbook,Grattonasilomar2018,Gratton2019,Giannakis2016,Hajinezhad2019,Nedic2009}. Distributed optimization problems pertain to several applications in statistics \cite{Mingyihongbook,Grattonasilomar2018}, signal processing \cite{Gratton2019,Giannakis2016}, machine learning, and control~\cite{Hajinezhad2019,Nedic2009}. 

An essential aspect of distributed learning is how the data is distributed among the agents that determines what each agent intends to or is able to learn. Horizontal partitioning of data refers to the case when the data samples containing all features are distributed over the network. That is, all the agents estimate the same common model. Examples of learning with horizontal partitioning of data can be found in \cite{Mingyihongbook,Gratton2019,Gratton2020eusipco,Bertrand2011}. On the other hand, when subsets of the features of all data samples are distributed over the network agents, we have feature partitioning of the data and every agent estimates a local model that is a part of the network-wide model. In the machine learning terminology, features are the descriptors or measurable characteristics of the data samples. In regression analysis, they may be called predictors or independent explanatory variables. 

Several machine learning problems deal with heterogeneous distributed data that common features cannot describe. For example, in multi-agent systems, each agent may acquire data to learn a local model and refrains from sharing the data with other agents due to resource constraints or privacy concerns. However, the aggregate data can be exploited to enhance accuracy or augment inference due to the correlation of the data across agents. In the Internet of things, a device may only be interested in estimating its own local model parameters. However, multiple devices distributed over an ad hoc network may be able to collectively process the network-wide data and enhance the estimation/inference quality. Distributed learning problems with feature partitioning arise in several signal processing applications, e.g., bioinformatics, multi-view learning, and dictionary learning, as mentioned in \cite{Boyd2010,Ying2019}. Data with feature partitioning can also be referred to as attribute-distributed data \cite{Zheng2011}, vertically-partitioned data \cite{Mangasarian,Vaidya}, data with column-partitioning \cite{Mota2012}, or heterogeneous data~\cite{Zheng2011}.

\subsection{Related Works}
%Assigning the Lagrange multipliers $\{\boldsymbol{\mu}_i\}_{i=1}^N$ to the equality constraints in \eqref{eqn:main_opt_aux_v0} results in the following Lagrangian function 

%The problem \eqref{eqn:dual_consensus_form_v0} can be solved via several existing algorithms, e.g., those proposed in \cite{Boyd2010,Giannakis2016,dc_admm,Scaglione,ZerothorderADMM,Wang2019bis}. However, the existing algorithms rely on the knowledge of the conjugate functions $f^*(\cdot)$ and $r_i^*(\cdot)$, which may be unfeasible or computationally inefficient to compute.

Learning problems with feature partitioning of data have been considered in \cite{Mota2012,Leus2018,Mota2013,Boyd2010,Kashyap2016,Heinze2015,dualloco,Heinze2017,Ying2019,Sayed_dictionary,Sayed_eusipco,Arablouei2015main,Virginiasmith,dc_admm,Grattonasilomar2018,Gratton2021eusipco,CISS2018,Diniu2019}. The algorithms proposed in \cite{Mota2012,Leus2018} solve the basis pursuit and lasso problems, respectively. The work of \cite{Mota2013} assumes an appropriate coloring scheme of the network and cannot be extended to a general graph labeling.

The algorithms proposed in \cite{Boyd2010,dualloco,Kashyap2016,Heinze2015,Heinze2017} are not fully distributed since their consensus constraints involve the entire network instead of each agent’s local neighborhood. Furthermore, the algorithms proposed in \cite{Kashyap2016,Heinze2015} only solve the ridge regression problem, while the works of \cite{dualloco,Heinze2017} assume the cost function to be convex and smooth with Lipschitz-continuous gradient. Both algorithms proposed in \cite{dualloco,Heinze2017} can only be used for minimization problems with $\ell_2$-norm regularization (ridge penalty) and rely on the computation of the conjugate of the cost function.

The algorithms in \cite{Ying2019,Sayed_dictionary,Sayed_eusipco,Arablouei2015main} are based on the diffusion strategy, which is suitable when stochastic gradients are available. Furthermore, the work in \cite{Ying2019} assumes that the cost function is convex and smooth with Lipschitz-continuous gradient. The algorithm developed in \cite{Sayed_dictionary} relies on the calculation of the relevant conjugate functions. In addition, it assumes that the cost function is convex and smooth, and the regularization functions are strongly convex. The work of \cite{Sayed_eusipco} also assumes that the regularizer functions are smooth and strongly convex. Moreover, it relies on the computation of conjugate functions similar to the algorithms proposed in \cite{Virginiasmith,dc_admm}.
The diffusion-based algorithm proposed in \cite{Arablouei2015main} only solves the ridge regression problem. 
The consensus-based algorithm of \cite{Grattonasilomar2018} is also designed for ridge regression with feature partitioning. It outperforms the algorithm proposed in \cite{Arablouei2015main} in terms of convergence speed.

The algorithm proposed in \cite{Gratton2021eusipco} is designed for an $\ell_2$-norm-square cost function and hence cannot be extended to general objective functions. The works of \cite{Szurley2017,Chen2014,Berberidis2014,Berberidis2015} consider distributed agent-specific parameter estimation problem. However, the objective functions considered in these works are smooth.
The authors of \cite{CISS2018} propose a distributed coordinate-descent algorithm to reduce the communication cost in distributed learning with feature partitioning. However, the cost function in \cite{CISS2018} is assumed to be strongly convex and smooth. The work of \cite{Diniu2019} considers an asynchronous stochastic gradient-descent algorithm for learning with distributed features. However, the objective function in \cite{Diniu2019} is assumed to be smooth.

None of the above-mentioned existing algorithms for distributed learning with feature partitioning is designed for optimizing generic non-smooth objective functions over arbitrary graphs without using or computing any conjugate function.

\begin{table}[t!]
	\caption{Comparative Summary}%\vspace{.7em}
	\centering
	\begin{tabular}{|c|c|c|c|c|}
		\hline
		& \vtop{\hbox{\strut fully}\hbox{\strut distributed}} & \vtop{\hbox{\strut non-smooth}\hbox{\strut cost function}}&\vtop{\hbox{\strut non-smooth}\hbox{\strut regularizer}} & \vtop{\hbox{\strut no conjugate}\hbox{\strut function}}   \\  %[0.5ex]
		\hline 
	  \cite{Gratton2021eusipco}      &    \checkmark   &     & \checkmark       &       \\ 
		\hline 
		 \cite{Grattonasilomar2018}  &  \checkmark     &      &       &       \\
		\hline 
		\cite{Boyd2010}              &       &    \checkmark   &  \checkmark     &     \\
		\hline 
		\cite{Ying2019}              &  \checkmark     &       &      & \checkmark      \\
		\hline 
		\cite{Mota2012}              &   \checkmark    &     \checkmark  &       &       \\
		\hline 
		\cite{Leus2018}              &      &      &     \checkmark  & \checkmark      \\
		\hline		
		\cite{Mota2013}              &  \checkmark     &     &     \checkmark  &      \\
		\hline 
		\cite{Kashyap2016}           &       &      &      & \checkmark      \\
		\hline	
		\cite{dualloco}              &       &       &       &     \\
		\hline
		\cite{Heinze2015}            &       &       &     & \checkmark \\
		\hline
		\cite{Heinze2017}            &       &     &     &      \\
		\hline
		\cite{Sayed_dictionary}      &       &     \checkmark  &     \checkmark  & \checkmark      \\
		\hline 
		\cite{Sayed_eusipco}         & \checkmark      &   \checkmark   &       &     \\
		\hline 		
		\cite{Arablouei2015main}     &  \checkmark     &      &       & \checkmark      \\
		\hline	
		\cite{Virginiasmith}         &       &       &     \checkmark  &       \\
		\hline
		\cite{dc_admm}               &   \checkmark    &    &     \checkmark  &      \\
		\hline		
		\cite{CISS2018}              &       &      &     \checkmark  & \checkmark      \\
		\hline
		\cite{Diniu2019}             &       &      &    & \checkmark      \\
		\hline
		\cite{Szurley2017}           & \checkmark  &      &    & \checkmark      \\
		\hline
		\cite{Chen2014}              & \checkmark  &      &    & \checkmark      \\
		\hline	
		\cite{Berberidis2014}        & \checkmark  &      &    & \checkmark      \\
		\hline
		\cite{Berberidis2015}        & \checkmark  &      &    & \checkmark      \\
		\hline		
		proposed                     &    \checkmark   &     \checkmark  &     \checkmark  & \checkmark      \\
		\hline		
	\end{tabular} 
    \label{table:ex}
\end{table}

\subsection{Contributions}

In this paper, we develop a new fully-distributed algorithm for solving learning problems when the data is distributed among agents in feature partitions and computing the conjugate of the possibly non-smooth cost or regularizer functions is challenging or unfeasible. We consider a general regularized non-smooth learning problem whose cost function cannot be written as the sum of local agent-specific cost functions, i.e., it is not separable as in \eqref{eqn:canon_objective} ahead.

To tackle the problem, we articulate the associated dual optimization problem and utilize the alternating direction method of multipliers (ADMM) to solve it as, unlike the original problem, its structure is suitable for distributed treatment via the ADMM. We then consider the dual of the optimization problem associated with the ADMM update step and solve it via the block coordinate-descent (BCD) algorithm. In that manner, we devise an approach that enables us to avoid the explicit computation of any conjugate function, which may be hard or infeasible for some objective functions. The proposed algorithm is fully distributed, i.e., it only relies on single-hop communications among neighboring agents and does not need any central coordinator or processing hub. We demonstrate that the proposed algorithm approaches the optimal centralized solution at all agents. Our experiments show that the proposed algorithm converges to the optimal solution in various scenarios and is competitive with the relevant existing algorithms even when dealing with problems that, unlike its contenders, it is not tailored for.

In Table~\ref{table:ex}, we provide a comparative summary of the proposed algorithm with respect to the most relevant existing ones in terms of the key features of being fully distributed, ability to handle non-smooth cost or regularization functions, and non-reliance on any conjugate function.

\subsection{Paper Organization}

The rest of the paper is organized as follows. In Section II, we describe the system model and formulate the distributed learning problem with feature partitioning when both the cost and regularizer functions are convex but not necessarily smooth. In Section III, we describe our proposed algorithm for solving the considered regularized learning problem in a distributed fashion without computing any conjugate function. Subsequently, we prove the convergence of the proposed algorithm by confirming that both its inner and outer loops converge in Section IV. We provide some simulation results in Section V and draw conclusions in Section VI.

\subsection{Mathematical Notations}

The set of natural and real numbers are denoted by $\mathbb{N}$ and $\mathbb{R}$, respectively. The set of positive real numbers is denoted by $\mathbb{R}_+$. Scalars, column vectors, and matrices are respectively denoted by lowercase, bold lowercase, and bold uppercase letters. The operators $(\cdot)^\mathsf{T}$, $\text{det}(\cdot)$, and $\text{tr}(\cdot)$ denote transpose, determinant, and trace of a matrix, respectively. The symbol $\norm{\cdot}$ represents the Euclidean norm of its vector argument and $\otimes$ stands for the Kronecker product. $\mathbf{I}_n$ is an identity matrix of size $n$, $\mathbf{0}_n$ is an $n\times 1$ vector with all zeros entries, $\mathbf{0}_{n\times p}=\mathbf{0}_n\mathbf{0}_p^\mathsf{T}$, and $|\cdot|$ denotes the cardinality operator if its argument is a set. The statistical expectation and covariance operators are represented by $\mathbb{E}[\cdot]$ and $\text{cov}[\cdot]$, respectively. For any positive semidefinite matrix $\mathbf{X}$, $\lambda_{\min}(\mathbf{X})$ and $\lambda_{\max}(\mathbf{X})$ denote the nonzero smallest and largest eigenvalues of $\mathbf{X}$, respectively. For a vector $\mathbf{x}\in\mathbb{R}^n$ and a positive semi-definite matrix $\mathbf{A}$, $\norm{\mathbf{x}}^2_{\mathbf{A}}$ denotes the quadratic form $\mathbf{x}^\mathsf{T}\mathbf{A}\mathbf{x}$. The conjugate function of any function $f$ is denoted by $f^*$. 

\begin{figure}[t!]
\begin{center}
\includegraphics[scale=0.45]{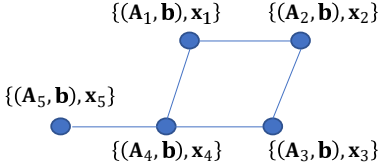}
\caption{Distributed features over a network with five agents.}
\label{second}
\end{center}
\end{figure}

\section{System Model}

We model a network with $N\in\mathbb{N}$ agents and $E\in\mathbb{N}$ edges as an undirected graph $\mathcal{G}(\mathcal{V},\mathcal{E})$ with the set of vertices $\mathcal{V}=\{1,\dots,N\}$ corresponding to the agents and the set of edges $\mathcal{E}$ standing for the bidirectional communication links between the pairs of agents. Agent $i\in\mathcal{V}$ communicates only with its neighbors specified by the set $\mathcal{V}_i$.

Let us denote the network-wide data as an observation matrix $\mathbf{A}\in\mathbb{R}^{M\times P}$ and a response vector $\mathbf{b}\in\mathbb{R}^{M\times 1}$ where $M$ is the number of data samples and $P$ is the total number of features across the network. As we consider the feature partitioning of the data, we denote the observation matrix of the $i$th agent by $\mathbf{A}_i\in\mathbb{R}^{M\times P_i}$ and its local model vector by $\mathbf{x}_i\in\mathbb{R}^{P_i\times 1}$ where $P_i$ is the number of features specific to agent $i$. Accordingly, we have $P=\sum_{i=1}^NP_i$ and $\mathbf{A}$ consists of $N$ submatrices $\mathbf{A}_i$ as
$$\mathbf{A}=[\mathbf{A}_1,\mathbf{A}_2,\hdots,\mathbf{A}_N].$$
The network-wide model vector $\mathbf{x}\in\mathbb{R}^{P\times 1}$ that relates $\mathbf{A}$ and $\mathbf{b}$ is also a stack of $N$ subvectors $\mathbf{x}_i$ as
$$\mathbf{x}=\left[ \mathbf{x}_1^{\text{T}},\mathbf{x}_2^{\text{T}},\hdots,\mathbf{x}_N^{\text{T}} \right]^{\text{T}}.$$
We give an example for feature partitioning of data in Fig.~\ref{second} where features are distributed over a network of five agents.

We consider a regularized learning problem consisting in minimizing a global cost function $f(\cdot)$ that is a function of the error $\mathbf{A}\mathbf{x}-\mathbf{b}$ and is added by a regularization function $r(\cdot)$. In the centralized approach, the optimal solution is given by
\begin{equation}
   \mathbf{x}^o=\arg\min_{\mathbf{x}}\Bigl\{ f \left(\mathbf{A}\mathbf{x}-\mathbf{b}\right)+r(\mathbf{x})\Bigr\}.
	\label{eqn:canon_objective_v0}
\end{equation}

Considering feature partitioning of the data, $\mathbf{A}\mathbf{x}$ can be written as $$\mathbf{A}\mathbf{x}=\sum_{i=1}^N\mathbf{A}_i\mathbf{x}_i$$ and assuming that the regularizer function $r(\cdot)$ can be written as a sum of agent-specific regularizer functions as 
\begin{equation*}
    r(\mathbf{x})=\sum_{i=1}^{N}r_i(\mathbf{x}_i),
\end{equation*}
the regularized learning problem \eqref{eqn:canon_objective_v0} is of the following form
\begin{equation}
   \underset{\{\mathbf{x}_i\} }{\text{min}}  \quad  f \left({\textstyle\sum_{i=1}^{N}} \mathbf{A}_i\mathbf{x}_i-\mathbf{b}\right) +\sum_{i=1}^{N}r_i(\mathbf{x}_i).
	\label{eqn:canon_objective}
\end{equation}

The learning problem \eqref{eqn:canon_objective} pertains to several applications in machine learning, e.g., regression over distributed features \cite{Boyd2010}, clustering in graphs \cite{Boyd_clustering_graphs}, smart grid control \cite{Scaglione}, dictionary learning \cite{Sayed_dictionary}, and network utility maximization \cite{Palomar_utility}. Similar to most existing works, e.g., \cite{Boyd2010,Ying2019,Fukushima1992}, we consider learning problems where functions $f(\cdot)$ and $r_i(\cdot)$,  $i=1,\hdots,N$, are convex, proper, and lower semi-continuous. However, in this work, the objective functions are not necessarily smooth or their conjugate functions known. Therefore, we propose a novel algorithm that solves \eqref{eqn:canon_objective} in a fully distributed fashion wherein each agent communicates only with its neighbors without requiring the computation of any conjugate function. In the next section, we describe our proposed algorithm.

\section{Algorithm}

We first present the reformulation of the considered non-separable problem into a dual form that can be solved in a fully-distributed fashion via the ADMM. Subsequently, we describe a new approach to perform the ADMM primal update step without explicitly computing any conjugate function of the cost or regularizer functions.

\subsection{Distributed ADMM for the Dual Problem}

To develop a distributed solution, we introduce the auxiliary variables $\{\mathbf{z}_i\}_{i=1}^N$ and recast \eqref{eqn:canon_objective} as
\begin{equation}
\begin{aligned}
& \underset{\{\mathbf{x}_i,\mathbf{z}_i\} }{\text{min}} 
& &  f \left({\textstyle\sum}_{i=1}^{N} \mathbf{z}_i-\mathbf{b}\right)+ \sum_{i=1}^{N}r_i(\mathbf{x}_i)     \\
&  \quad \text{s. t.  } 
& &  \mathbf{A}_i\mathbf{x}_i = \mathbf{z}_i , \quad i=1,\hdots,N.
\end{aligned}
\label{eqn:main_opt_aux}
\end{equation} 
%%$L(\{\mathbf{x}_i\}, \{\mathbf{z}_i\},\{\boldsymbol{\mu}_i\})$
The cost function $f(\cdot)$ in \eqref{eqn:main_opt_aux} is not separable among the agents. We consider the dual problem of \eqref{eqn:main_opt_aux} and exploit its separability property, which is lacking in the primal domain, to solve it by employing the ADMM. For this purpose, we associate the Lagrange multipliers $\{\boldsymbol{\mu}_i\}_{i=1}^N$ with the equality constraints in \eqref{eqn:main_opt_aux} and state the related Lagrangian function as
\begin{equation}
\begin{aligned}
& \mathcal{L}(\{\mathbf{x}_i\},\{\mathbf{z}_i\},\{\boldsymbol{\mu}_i\})\\
=&f \left({\textstyle\sum}_{i=1}^{N} \mathbf{z}_i-\mathbf{b}\right)+ \sum_{i=1}^{N}r_i(\mathbf{x}_i)+\sum_{i=1}^N\boldsymbol{\mu}_i^\mathsf{T}(\mathbf{A}_i\mathbf{x}_i-\mathbf{z}_i)\\
=&\sum_{i=1}^N\left(r_i(\mathbf{x}_i)+(\mathbf{A}_i^\mathsf{T}\boldsymbol{\mu}_i)^\mathsf{T}\mathbf{x}_i\right)\\
&+f\left(\sum_{i=1}^N\mathbf{z}_i-\mathbf{b}\right)-\sum_{i=1}^N\boldsymbol{\mu}_i^\mathsf{T}\mathbf{z}_i.
\end{aligned}
\label{eqn:main_lagr}
\end{equation} 

The dual function for problem \eqref{eqn:main_opt_aux} can be computed as 
\begin{equation}
\begin{aligned}
&d(\{\boldsymbol{\mu}_i\})=\inf_{\{\mathbf{x}_i,\mathbf{z}_i\}}\mathcal{L}(\{\mathbf{x}_i\},\{\mathbf{z}_i\},\{\boldsymbol{\mu}_i\}) \\
=& -\sum_{i=1}^Nr_i^*(-\mathbf{A}_i^\mathsf{T}\boldsymbol{\mu}_i)+\inf_{\mathbf{z}_i}f(\sum_{i=1}^N\mathbf{z}_i-\mathbf{b})-\sum_{i=1}^N\boldsymbol{\mu}_i^\mathsf{T}\mathbf{z}_i
	\label{eqn:dual_func_main}
	\end{aligned}
\end{equation}
where $r_i^*$ is the conjugate function of $r$ defined as
$$r_i^*(\mathbf{y}) = \sup_{\mathbf{x}}\; \mathbf{y}^T\mathbf{x} - r_i(\mathbf{x}).$$
%For the second infimum on the right-hand side of \eqref{eqn:dual_func_main},  
Introducing auxiliary variable $\mathbf{z}$ that is defined as  $$\mathbf{z}=\sum_{i=1}^N\mathbf{z}_i$$ and using the duality theory, an alternate form of the dual function \eqref{eqn:dual_func_main} is given by
\begin{equation}
\tilde{d}(\{\boldsymbol{\mu}_i\},\boldsymbol{\lambda})= -f^* (\boldsymbol{\lambda})-\boldsymbol{\lambda}^\mathsf{T}\mathbf{b}- \sum_{i=1}^{N}r_i^*(-\mathbf{A}_i^\mathsf{T} \boldsymbol{\mu}_i)
\label{eqn:dual_func_main_1}
\end{equation}
when $\boldsymbol{\lambda} = \boldsymbol{\mu}_i$ $\forall i\in\mathcal{V}$ with $\boldsymbol{\lambda}$ being the dual variable corresponding to $\mathbf{z}=\sum_{i=1}^N\mathbf{z}_i$. Otherwise, we have $\tilde{d}(\{\boldsymbol{\mu}_i\},\boldsymbol{\lambda})=-\infty$.

By eliminating $\boldsymbol{\lambda}$, the dual problem for \eqref{eqn:main_opt_aux} can be expressed as 
\begin{equation}
\begin{aligned}
& \underset{\{\boldsymbol{\mu}_i\} }{\text{min}} 
& &  \frac{1}{N}\sum_{i=1}^{N}\left(f^* (\boldsymbol{\mu}_i)+\boldsymbol{\mu}_i^\mathsf{T}\mathbf{b}\right) + \sum_{i=1}^{N}r_i^*(-\mathbf{A}_i^\mathsf{T} \boldsymbol{\mu}_i)     \\
&  \quad \text{s. t.  } 
& &   \boldsymbol{\mu}_1=\boldsymbol{\mu}_2 = \cdots = \boldsymbol{\mu}_N.
\end{aligned}
\label{eqn:dual_consensus_form}
\end{equation}
To facilitate a fully-distributed solution, we decouple the constraints in \eqref{eqn:dual_consensus_form} as
\begin{equation}
\boldsymbol{\mu}_i=\mathbf{u}_i^j,\quad  \boldsymbol{\mu}_j=\mathbf{u}_i^j, \ j\in\mathcal{V}_i, \  i=1,\hdots,N
\label{eqn:main_opt_aux_re}
\end{equation}
where $\{\mathbf{u}_{i}^{j}\}_{i\in\mathcal{V},j\in\mathcal{V}_i}$ are auxiliary variables that will eventually be eliminated. 
We generate a new augmented Lagrangian function by associating the new Lagrange multipliers $\{\bar{\mathbf{v}}_i^j\}_{j\in\mathcal{V}_i}$ and $\{\tilde{\mathbf{v}}_i^j\}_{j\in\mathcal{V}_i}$ with the consensus constraints in \eqref{eqn:main_opt_aux_re}. By using the Karush-Kuhn-Tucker conditions of optimality for \eqref{eqn:main_opt_aux_re} and setting $$\mathbf{v}_i^{(k)}=2\sum_{j\in\mathcal{V}_i}(\bar{\mathbf{v}}_i^j)^{(k)},$$ it can be shown that the Lagrange multipliers $\{\tilde{\mathbf{v}}_i^j\}_{j\in\mathcal{V}_i}$ and the auxiliary variables $\{\mathbf{u}_i^j\}_{j\in\mathcal{V}_i}$ are eliminated \cite{Giannakis2016}. Hence, the ADMM to solve \eqref{eqn:dual_consensus_form} reduces to the following iterative updates at the $i$th agent  

\begin{align}
 \boldsymbol{\mu}_i^{(k)} &= \arg \min_{\boldsymbol{\mu}_i}   \;\Bigl\{\frac{1}{N}f^* (\boldsymbol{\mu}_i)+\frac{1}{N}\boldsymbol{\mu}_i^\mathsf{T}\mathbf{b}+r_i^*(-\mathbf{A}_i^\mathsf{T} \boldsymbol{\mu}_i)  \nonumber\\
&+\boldsymbol{\mu}_i^\mathsf{T}\mathbf{v}_i^{(k-1)}+ \rho \sum_{j \in \mathcal{V}_i}\Big\|\boldsymbol{\mu}_i - \frac{\boldsymbol{\mu}_i^{(k-1)}+\boldsymbol{\mu}_j^{(k-1)}}{2}\Big\|^2\Bigr\} \label{eqn:ADMM_dual_org_1}\\
 \mathbf{v}_i^{(k)}&=\mathbf{v}_i^{(k-1)} + \rho \sum_{j \in \mathcal{V}_i} (\boldsymbol{\mu}_i^{(k)}-\boldsymbol{\mu}_j^{(k)})\label{eqn:ADMM_dual_org_2}
\end{align}
where $\rho>0$ is the penalty parameter and $k$ is the iteration index. 

The objective function in \eqref{eqn:ADMM_dual_org_1} may be non-smooth as the global cost function $f(\cdot)$ or the agent-specific regularizer functions $r_i(\cdot)$, $i=1,\hdots,N$, and consequently their conjugate functions may be non-smooth. Thus, the minimization problem in \eqref{eqn:ADMM_dual_org_1} can be solved by employing suitable subgradient methods or proximal operators \cite{Bertsekas99,Boyd2014}. %Separability of the non-smooth terms in \eqref{eqn:ADMM_dual_org_1} is sufficient to ensure the convergence to the unique minimizer of \eqref{eqn:ADMM_dual_org_1} as $k\rightarrow\infty$ \cite{Tseng2001,Mateos2010}.
%If first-order information is not available, \eqref{eqn:ADMM_dual_org_1} can be solved using zeroth-order methods \cite{Duchi2015,Nesterov2017}. 
However, computing the conjugate functions of the cost or the regularizer functions in \eqref{eqn:ADMM_dual_org_1} may be hard or even unfeasible. To overcome this challenge, in the next subsection, we describe a new approach that does not require the explicit calculation of any conjugate function. 

\subsection{ADMM without Conjugate Function}

We rewrite the minimization problem in the ADMM primal update  \eqref{eqn:ADMM_dual_org_1} as 
\begin{equation}
\begin{aligned}
\boldsymbol{\mu}_i^{(k)}=\arg\min_{\{\boldsymbol{\mu}_i,\boldsymbol{\nu}_i,\boldsymbol{\alpha}_i\}} 
&\Bigl\{ \frac{f^*(\boldsymbol{\mu}_i)+\boldsymbol{\mu}_i^\mathsf{T}\mathbf{b}}{N}+r_i^*(\boldsymbol{\nu}_i)\\&+\boldsymbol{\mu}_i^\mathsf{T}\mathbf{c}_i^{(k-1)}+\bar{\rho}_i\norm{\boldsymbol{\alpha}_i}^2\Bigr\}\\
\text{s.t.\quad\ } 
&\ \mathbf{A}_i^\mathsf{T}\boldsymbol{\mu}_i+\boldsymbol{\nu}_i=\mathbf{0}\\
&\ \boldsymbol{\mu}_i=\boldsymbol{\alpha}_i
\end{aligned}
\label{first_sect_b}
\end{equation}
where $\bar{\rho}_i=\rho|\mathcal{V}_i|$ and
\begin{equation}
\label{second_sect_b}
\mathbf{c}_i^{(k-1)}=\mathbf{v}_i^{(k-1)}-\rho|\mathcal{V}_i|\boldsymbol{\mu}_i^{(k-1)}-\rho\sum_{j\in\mathcal{V}_i}\boldsymbol{\mu}_j^{(k-1)}.
\end{equation}
The Lagrangian function related to \eqref{first_sect_b} is stated as
\begin{equation}
\begin{aligned}
\mathcal{L}_k(\boldsymbol{\mu}_i,\boldsymbol{\nu}_i,\boldsymbol{\alpha}_i,\boldsymbol{\theta}_i^{(k)},\boldsymbol{\beta}_i^{(k)})
&=\frac{f^*(\boldsymbol{\mu}_i)+\boldsymbol{\mu}_i^\mathsf{T}\mathbf{b}}{N}+r_i^*(\boldsymbol{\nu}_i)+\boldsymbol{\mu}_i^\mathsf{T}\mathbf{c}_i^{(k-1)}\\
&+\bar{\rho}_i\norm{\boldsymbol{\alpha}_i}^2+(\boldsymbol{\theta}_i^{(k)})^\mathsf{T}(\mathbf{A}_i^\mathsf{T}\boldsymbol{\mu}_i+\boldsymbol{\nu}_i)\\
&+(\boldsymbol{\beta}_i^{(k)})^\mathsf{T}(\boldsymbol{\mu}_i-\boldsymbol{\alpha}_i)
\end{aligned}
\label{third_sect_b}
\end{equation}
where $\boldsymbol{\theta}_i^{(k)}$ and $\boldsymbol{\beta}_i^{(k)}$ are the Lagrange multipliers associated with the first and the second constraints in \eqref{first_sect_b}, respectively, at iteration $k$.

Motivated by the close connection between a function and its double conjugate (conjugate of conjugate), we express the dual for the objective in \eqref{first_sect_b} as
\begin{equation}
\begin{aligned}
&\delta_k(\boldsymbol{\theta}_i^{(k)},\boldsymbol{\beta}_i^{(k)})
=\inf_{\{\boldsymbol{\mu}_i,\boldsymbol{\nu}_i,\boldsymbol{\alpha}_i\}}\mathcal{L}_k(\boldsymbol{\mu}_i,\boldsymbol{\nu}_i,\boldsymbol{\alpha}_i,\boldsymbol{\theta}_i^{(k)},\boldsymbol{\beta}_i^{(k)})\\
=&\inf_{\boldsymbol{\nu}_i}\{r_i^*(\boldsymbol{\nu}_i)+(\boldsymbol{\theta}_i^{(k)})^\mathsf{T}\boldsymbol{\nu}_i\}+\inf_{\boldsymbol{\alpha}_i}\{\bar{\rho}_i\norm{\boldsymbol{\alpha}_i}^2-(\boldsymbol{\beta}_i^{(k)})^\mathsf{T}\boldsymbol{\alpha}_i\}\\
+&\inf_{\boldsymbol{\mu}_i}\Bigl\{\frac{f^*(\boldsymbol{\mu}_i)}{N}+\left(\mathbf{c}_i^{(k-1)}+\mathbf{A}_i\boldsymbol{\theta}_i^{(k)}+\frac{\mathbf{b}}{N}+\boldsymbol{\beta}_i^{(k)}\right)^\mathsf{T}\boldsymbol{\mu}_i\Bigr\}.
\end{aligned}
\label{fourth_sect_b}
\end{equation}
By employing the definition of conjugate function, the first infimum in \eqref{fourth_sect_b} is equal to $-r_i^{**}(-\boldsymbol{\theta}_i^{(k)})$. The second infimum in \eqref{fourth_sect_b} can be easily obtained by noting that the function $$l_k(\boldsymbol{\alpha}_i)\coloneqq \bar{\rho}_i\norm{\boldsymbol{\alpha}_i}^2-(\boldsymbol{\beta}_i^{(k)})^\mathsf{T}\boldsymbol{\alpha}_i$$ is quadratic in $\boldsymbol{\alpha}_i$. Hence, this infimum can be calculated by computing the gradient of $l_k(\cdot)$ and equating it to zero, i.e., $$\bar{\rho}_i\norm{\boldsymbol{\alpha}_i}^2-(\boldsymbol{\beta}_i^{(k)})^\mathsf{T}\boldsymbol{\alpha}_i=0.$$ Solving this equation for $\boldsymbol{\alpha}_i$ gives
\begin{equation}
\label{fifth_sect_b}
\boldsymbol{\alpha}_i^o=\frac{\boldsymbol{\beta}_i^{(k)}}{2\bar{\rho}_i}.
\end{equation}
This implies that the second infimum in \eqref{fourth_sect_b} is attained at the optimal value $\boldsymbol{\alpha}_i^o$, which in turn means that it is equal to
$$l_k(\boldsymbol{\alpha}_i^o)=-\frac{\norm{\boldsymbol{\beta}_i^{(k)}}^2}{4\bar{\rho}_i}.$$

In view of the definition and properties of the conjugate function \cite{BoydStephenP2004Co}, the third infimum in \eqref{fourth_sect_b} is given by
$$-Nf^{**}\left(-\mathbf{c}_i^{(k-1)}-\mathbf{A}_i\boldsymbol{\theta}_i^{(k)}-\frac{\mathbf{b}}{N}-\boldsymbol{\beta}_i^{(k)}\right).$$
Therefore, we have
\begin{equation}
\begin{aligned}
\delta_k(\boldsymbol{\theta}_i^{(k)},\boldsymbol{\beta}_i^{(k)})=&-r_i^{**}(-\boldsymbol{\theta}_i^{(k)})-\frac{\norm{\boldsymbol{\beta}_i^{(k)}}^2}{4\bar{\rho}_i}\\
&-Nf^{**}\left(-\mathbf{c}_i^{(k-1)}-\mathbf{A}_i\boldsymbol{\theta}_i^{(k)}-\frac{\mathbf{b}}{N}-\boldsymbol{\beta}_i^{(k)}\right).
\end{aligned}
\label{sixth_sect_b}
\end{equation}
Since $f(\cdot)$ and $r_i(\cdot)$ are convex, proper, and lower semi-continuous, we know $f^{**}=f$ and $r_i^{**}=r_i$ due to the Fenchel Moreau Theorem \cite{Borwein}. Therefore, we have
\begin{equation}
\begin{aligned}
\delta_k(\boldsymbol{\theta}_i^{(k)},\boldsymbol{\beta}_i^{(k)})=&-r_i(-\boldsymbol{\theta}_i^{(k)})-\frac{\norm{\boldsymbol{\beta}_i^{(k)}}^2}{4\bar{\rho}_i}\\
&-Nf\left(-\mathbf{c}_i^{(k-1)}-\mathbf{A}_i\boldsymbol{\theta}_i^{(k)}-\frac{\mathbf{b}}{N}-\boldsymbol{\beta}_i^{(k)}\right).
\end{aligned}
\label{seventh_sect_b}
\end{equation}

To find the optimal $(\boldsymbol{\theta}_i^{(k)},\boldsymbol{\beta}_i^{(k)})$, we need to maximize $\delta_k(\boldsymbol{\theta}_i^{(k)},\boldsymbol{\beta}_i^{(k)})$ or, equivalently, to minimize $-\delta_k(\boldsymbol{\theta}_i^{(k)},\boldsymbol{\beta}_i^{(k)})$. Since this is a function of two variables $\boldsymbol{\theta}_i^{(k)}$ and $\boldsymbol{\beta}_i^{(k)}$, we employ the block coordinate descent algorithm (BCD) to minimize $-\delta_k(\boldsymbol{\theta}_i^{(k)},\boldsymbol{\beta}_i^{(k)})$ and find the optimal values for $(\boldsymbol{\theta}_i^{(k)},\boldsymbol{\beta}_i^{(k)})$. The BCD steps are obtained by alternatively minimizing $-\delta_k(\boldsymbol{\theta}_i^{(k)},\boldsymbol{\beta}_i^{(k)})$ with respect to $\boldsymbol{\theta}_i^{(k)}$ and $\boldsymbol{\beta}_i^{(k)}$ as follows
 \begin{align}
\boldsymbol{\theta}_i^{(k,t)}
&=\arg\min_{\boldsymbol{\theta}_i^{(k)}}\Bigl\{r_i(-\boldsymbol{\theta}_i^{(k)})\nonumber\\
&\quad\quad+Nf\left(-\mathbf{c}_i^{(k-1)}-\mathbf{A}_i\boldsymbol{\theta}_i^{(k,t)}-\frac{\mathbf{b}}{N}-\boldsymbol{\beta}_i^{(k,t-1)}\right)\Bigr\}\label{tenth_sect_b_a}\\
\boldsymbol{\beta}_i^{(k,t)}
&=\arg\min_{\boldsymbol{\beta}_i^{(k)}}\Bigl\{\frac{1}{4\bar{\rho}_i}\norm{\boldsymbol{\beta}_i^{(k)}}^2\nonumber\\
&\quad\quad+Nf\left(-\mathbf{c}_i^{(k-1)}-\mathbf{A}_i\boldsymbol{\theta}_i^{(t)}-\frac{\mathbf{b}}{N}-\boldsymbol{\beta}_i^{(k)}\right)\Bigr\}
\label{tenth_sect_b_b}
\end{align}
where $t$ is the BCD iteration index. If we assume that the BCD algorithm converges after $T$ iterations. The optimal values of $\boldsymbol{\theta}_i^{(k)}$ and $\boldsymbol{\beta}_i^{(k)}$ can be denoted by $\boldsymbol{\theta}_i^{(k,T)}$ and $\boldsymbol{\beta}_i^{(k,T)}$, respectively.

To update the Lagrange multipliers $\boldsymbol{\mu}_i^{(k)}$ we employ the complementary slackness conditions, i.e., 
$$\boldsymbol{\beta}_i^{(k,T)}(\boldsymbol{\mu}_i^{(k)}-\boldsymbol{\alpha}_i^{o})=\mathbf{0}\quad \forall i\in\mathcal{V}.$$
Since $\boldsymbol{\beta}_i^{(k,T)}\neq\mathbf{0}$, $\forall i\in\mathcal{V}$, we have $$\boldsymbol{\mu}_i^{(k)}-\boldsymbol{\alpha}_i^{o}=\mathbf{0}\quad \forall i\in\mathcal{V}.$$ Using \eqref{fifth_sect_b}, we can update $\boldsymbol{\mu}_i^{(k)}$ as 
\begin{equation}
    \label{eighth_sect_b}
    \boldsymbol{\mu}_i^{(k)}=\frac{\boldsymbol{\beta}_i^{(k,T)}}{2\bar{\rho}_i}.
\end{equation}
Collating the expressions in \eqref{eighth_sect_b}, \eqref{second_sect_b}, \eqref{eqn:ADMM_dual_org_2}, the ADMM steps in \eqref{eqn:ADMM_dual_org_1} and \eqref{eqn:ADMM_dual_org_2} can be equivalently expressed as 
\begin{align}
\boldsymbol{\mu}_i^{(k)}&=\frac{\boldsymbol{\beta}_i^{(k,T)}}{2\bar{\rho}_i}\label{ninth_sect_b_a}\\
\mathbf{v}_i^{(k)}&=\mathbf{v}_i^{(k-1)} + \rho \sum_{j \in \mathcal{V}_i} (\boldsymbol{\mu}_i^{(k)}-\boldsymbol{\mu}_j^{(k)})\label{ninth_sect_b_b}\\
\mathbf{c}_i^{(k-1)}&=\mathbf{v}_i^{(k-1)}-\rho|\mathcal{V}_i|\boldsymbol{\mu}_i^{(k-1)}-\rho\sum_{j\in\mathcal{V}_i}\boldsymbol{\mu}_j^{(k-1)}\label{ninth_sect_b_c}
\end{align}
where $k$ is the ADMM iteration index. We summarize the proposed algorithm in Algorithm 1.

\begin{algorithm}[t!]
\caption{The proposed algorithm for feature-partitioned distributed learning with unknown conjugate functions}
\label{alg:FDA}
%\centering
 \begin{algorithmic}
  \STATE At all agents $i\in\mathcal{V}$, initialize $\boldsymbol{\mu}_i^{(0)}=\mathbf{0}$, $\mathbf{v}_i^{(0)}=\mathbf{0}$, and locally run:
   \FOR{$k=1,2,\hdots,K$} %\REPEAT
   \STATE Run BCD loop
   \FOR{$t=1,2,\hdots,T$}
   \STATE Update $\boldsymbol{\theta}_i^{(k,t)}$ via \eqref{tenth_sect_b_a}.
   \STATE Update $\boldsymbol{\beta}_i^{(k,t)}$ via \eqref{tenth_sect_b_b}.
   \ENDFOR
    \STATE  Update the dual variables $\boldsymbol{\mu}_i^{(k)}=\boldsymbol{\beta}_i^{(k,T)}/(2\bar{\rho}_i)$.
	\STATE Share $\boldsymbol{\mu}_i^{(k)}$ with the neighbors in $\mathcal{V}_i$.    
    \STATE Update the Lagrange multipliers $\mathbf{v}_i^{(k)}$ via \eqref{ninth_sect_b_b}.
	\STATE Update the auxiliary variables $\mathbf{c}_i^{(k)}$ via \eqref{ninth_sect_b_c}.
	%$\mathbf{c}_i^{(k)} =\mathbf{v}_i^{(k)} - \rho  |\mathcal{N}_i|\boldsymbol{\mu}_i^{(k)} - \rho \sum_{j\in\mathcal{N}_i}\boldsymbol{\mu}_j^{(k)}$
	\ENDFOR
 \end{algorithmic} 
 \end{algorithm}

%Note that the minimization problems in \eqref{tenth_sect_b_a} and \eqref{tenth_sect_b_b} can be solved using standard optimization techniques, or alternatively, subgradient-based algorithms \cite{Mingyihongbook}.
Assuming that the ADMM outer loop converges after $K$ iterations, we denote the optimal dual variable $\boldsymbol{\theta}_i^{(K,T)}$ by $\boldsymbol{\theta}_i^o$. The estimate $\boldsymbol{\theta}_i^o$ at agent $i$ is indeed the optimal solution to the original problem \eqref{eqn:canon_objective}, i.e., $\mathbf{x}_i^o$, as per the following theorem.

\begin{theo}
For all agents $i\in\mathcal{V}$, the optimal dual variable $\boldsymbol{\theta}_i^o$ at agent $i$ is equal to the optimal estimate $\mathbf{x}_i^o$, i.e., $\boldsymbol{\theta}_i^o=\mathbf{x}_i^o$.
\end{theo}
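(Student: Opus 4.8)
The plan is to show that the tuple $(\{\boldsymbol{\theta}_i^o\},\{\mathbf{A}_i\boldsymbol{\theta}_i^o\},\boldsymbol{\mu}^o)$, where $\boldsymbol{\mu}^o$ is the common value of the dual iterates at ADMM convergence, is a KKT point of the auxiliary problem \eqref{eqn:main_opt_aux} --- equivalently, that $\{\boldsymbol{\theta}_i^o\}$ satisfies the first-order optimality conditions of \eqref{eqn:canon_objective} --- and then to invoke uniqueness of the minimizer to conclude $\boldsymbol{\theta}_i^o=\mathbf{x}_i^o$. The engine is standard convex duality (all coupling constraints being affine) applied in turn along the chain \eqref{eqn:main_opt_aux} $\to$ \eqref{eqn:dual_consensus_form} $\to$ \eqref{first_sect_b} $\to$ \eqref{seventh_sect_b}, together with the Fenchel--Moreau identities $f^{**}=f$, $r_i^{**}=r_i$ already used to obtain \eqref{seventh_sect_b}.

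First I would write out the KKT conditions of \eqref{eqn:main_opt_aux}: at a primal--dual optimum one has $\mathbf{A}_i\mathbf{x}_i^o=\mathbf{z}_i^o$; stationarity in $\mathbf{z}_i$ forces the multipliers into consensus, $\boldsymbol{\mu}_i^o=\boldsymbol{\mu}^o\in\partial f(\mathbf{A}\mathbf{x}^o-\mathbf{b})$ (precisely the feasible set of the dual \eqref{eqn:dual_consensus_form}); and stationarity in $\mathbf{x}_i$ yields $-\mathbf{A}_i^{\mathsf{T}}\boldsymbol{\mu}^o\in\partial r_i(\mathbf{x}_i^o)$, equivalently $\mathbf{x}_i^o\in\partial r_i^*(-\mathbf{A}_i^{\mathsf{T}}\boldsymbol{\mu}^o)$. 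By strong duality $\boldsymbol{\mu}^o$ solves \eqref{eqn:dual_consensus_form}. Next I would use the convergence of the outer ADMM loop (established in Section~IV): at the fixed point reached after $K$ iterations, $\boldsymbol{\mu}_i^{(K)}=\boldsymbol{\mu}_j^{(K)}=\boldsymbol{\mu}^o$ and $\mathbf{v}_i^{(K)}=\mathbf{v}_i^{(K-1)}=:\mathbf{v}_i^o$ with $\sum_i\mathbf{v}_i^o=\mathbf{0}$, and the first-order optimality of the ADMM primal step \eqref{eqn:ADMM_dual_org_1} at this fixed point (where the quadratic term has vanishing gradient at $\boldsymbol{\mu}^o$) reads $\mathbf{0}\in\tfrac{1}{N}\partial f^*(\boldsymbol{\mu}^o)+\tfrac{1}{N}\mathbf{b}-\mathbf{A}_i\,\partial r_i^*(-\mathbf{A}_i^{\mathsf{T}}\boldsymbol{\mu}^o)+\mathbf{v}_i^o$; using the conjugate--subgradient correspondence together with the KKT selections above, this gives $\mathbf{v}_i^o=\mathbf{A}_i\mathbf{x}_i^o-\tfrac{1}{N}\mathbf{A}\mathbf{x}^o$ (consistent with $\sum_i\mathbf{v}_i^o=\mathbf{0}$).

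The core of the proof is the inner duality at $k=K$. Strong duality for \eqref{first_sect_b} provides a KKT system linking the BCD output $(\boldsymbol{\theta}_i^o,\boldsymbol{\beta}_i^o)=(\boldsymbol{\theta}_i^{(K,T)},\boldsymbol{\beta}_i^{(K,T)})$ to the primal optimum $\boldsymbol{\mu}_i^{(K)}=\boldsymbol{\mu}^o$, $\boldsymbol{\nu}_i^o=-\mathbf{A}_i^{\mathsf{T}}\boldsymbol{\mu}^o$, $\boldsymbol{\alpha}_i^o=\boldsymbol{\mu}^o$: stationarity in $\boldsymbol{\alpha}_i$ reproduces \eqref{fifth_sect_b}--\eqref{eighth_sect_b}, so $\boldsymbol{\beta}_i^o=2\bar{\rho}_i\boldsymbol{\mu}^o$, while stationarity in $\boldsymbol{\mu}_i$ of \eqref{third_sect_b} reads $\mathbf{0}\in\tfrac{1}{N}\partial f^*(\boldsymbol{\mu}^o)+\tfrac{1}{N}\mathbf{b}+\mathbf{c}_i^{(K-1)}+\mathbf{A}_i\boldsymbol{\theta}_i^o+\boldsymbol{\beta}_i^o$. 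The decisive cancellation is that, since all $\boldsymbol{\mu}$'s coincide at the fixed point, \eqref{second_sect_b} gives $\mathbf{c}_i^{(K-1)}=\mathbf{v}_i^o-2\bar{\rho}_i\boldsymbol{\mu}^o$, hence $\mathbf{c}_i^{(K-1)}+\boldsymbol{\beta}_i^o=\mathbf{v}_i^o$; equivalently, the BCD fixed-point relations \eqref{tenth_sect_b_a}--\eqref{tenth_sect_b_b} collapse to a subproblem whose $f$-argument is $-\mathbf{v}_i^o-\mathbf{A}_i\boldsymbol{\theta}_i^o-\tfrac{1}{N}\mathbf{b}$ (up to the scale factor carried over from \eqref{fourth_sect_b} to \eqref{seventh_sect_b}). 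Substituting $\mathbf{v}_i^o=\mathbf{A}_i\mathbf{x}_i^o-\tfrac{1}{N}\mathbf{A}\mathbf{x}^o$ makes the $f^*/f$-terms and the $\mathbf{b}$-terms cancel, and what remains is exactly the block-$i$ stationarity condition of \eqref{eqn:canon_objective}. Thus $(\{\boldsymbol{\theta}_i^o\},\{\mathbf{A}_i\boldsymbol{\theta}_i^o\},\boldsymbol{\mu}^o)$ satisfies the same KKT system as $(\{\mathbf{x}_i^o\},\{\mathbf{z}_i^o\},\boldsymbol{\mu}^o)$, and by uniqueness of the minimizer of \eqref{eqn:canon_objective} we obtain $\boldsymbol{\theta}_i^o=\mathbf{x}_i^o$ for every $i$; absent uniqueness, the identity is read as asserting that $\boldsymbol{\theta}_i^o$ is an optimal estimate.

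I expect the main obstacle to be exactly this bookkeeping: propagating signs and the $\tfrac{1}{N}$-versus-$N$ scale factors faithfully through three nested Legendre transforms, and verifying that the auxiliary quantities $\mathbf{c}_i^{(k-1)}$ and $\boldsymbol{\beta}_i^{(k)}$ manufactured by the ADMM/BCD iterations telescope precisely into the consensus residual $\mathbf{v}_i^o$ at the outer fixed point, so that the inner $\boldsymbol{\theta}_i$-subproblem reduces verbatim to the block-$i$ first-order condition of the original problem. A secondary subtlety is working at the level of subdifferentials rather than gradients: matching inclusions $\mathbf{p}\in\partial r_i^*(\mathbf{q})$ returns equality of the corresponding primal points only when $r_i^*$ is differentiable at $\mathbf{q}$ (or the minimizer is unique), which is why the final identification either needs the uniqueness hypothesis or must be stated in terms of optimality rather than literal equality.
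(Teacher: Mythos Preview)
Your approach is correct, but it is far more elaborate than the paper's. The paper disposes of Theorem~1 in two lines: it observes that problem~\eqref{eqn:main_opt_aux} is convex and feasible, so Slater's condition holds, strong duality follows, and hence $\boldsymbol{\theta}_i^o=\mathbf{x}_i^o$. No KKT systems are written out, no fixed-point bookkeeping through the ADMM iterates is performed, and the telescoping of $\mathbf{c}_i^{(k-1)}+\boldsymbol{\beta}_i^{(k)}$ into $\mathbf{v}_i^o$ that you carefully track is simply not mentioned.

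What you have done is unpack what the paper's one-liner tacitly asserts. The paper relies on the reader recognising that $\boldsymbol{\theta}_i$ is the Lagrange multiplier of a constraint that, after two levels of dualisation, corresponds exactly to the primal variable $\mathbf{x}_i$; your chain of KKT matchings through \eqref{eqn:main_opt_aux}~$\to$~\eqref{eqn:dual_consensus_form}~$\to$~\eqref{first_sect_b}~$\to$~\eqref{seventh_sect_b} makes this identification explicit and verifiable. Your proof therefore buys rigour---in particular it surfaces the uniqueness caveat you raise at the end, which the paper's appeal to strong duality also implicitly needs but does not state. The paper's proof buys brevity, at the cost of leaving the actual mechanism (why the dual-of-the-dual variable lands back on the primal block $\mathbf{x}_i$, with the correct signs and scale) entirely to the reader.
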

\begin{proof}
Since the optimization problem in \eqref{eqn:main_opt_aux} has a convex objective and is feasible, the Slater's condition is satisfied. Therefore, due to the Slater's theorem, strong duality holds and $\boldsymbol{\theta}_i^o=\mathbf{x}_i^o$, $\forall i\in\mathcal{V}$ \cite{BoydStephenP2004Co}.
\end{proof}

\section{Convergence Analysis}\label{conv-anal}

The convergence of the proposed algorithm can be proven by corroborating that both the inner-loop BCD and outer-loop ADMM iterations converge. First, the convergence of the inner loop can be verified from results in \cite{Mingyihongbook} since all the assumptions required for the convergence are satisfied, i.e., the function $\delta(\cdot)$ is convex and the feasible sets $\mathbb{R}^M$ and $\mathbb{R}^{P_i}$, $\forall i\in\mathcal{V}$, are all convex. Assuming that the optimal solution $\boldsymbol{\beta}_i^o$ of the inner-loop BCD algorithm is attained for each $i\in\mathcal{V}$, the dual variable $\boldsymbol{\mu}_i^{(k)}$ in the outer loop is updated accordingly. 
%Therefore, the outer-loop ADMM update of $\boldsymbol{\mu}_i^{(k)}$ in \eqref{eqn:ADMM_dual_org_1} is inexact.

Next, we prove that the estimates produced by the fully-distributed ADMM outer loop, i.e., \eqref{eqn:ADMM_dual_org_1} and \eqref{eqn:ADMM_dual_org_2}, approach the optimal centralized solution at all agents. To present the convergence result, we rewrite the constraints in \eqref{eqn:dual_consensus_form} as follows
\begin{equation}
\begin{aligned}
\boldsymbol{\mu}_i=\bar{\mathbf{u}}_i^j,\quad \boldsymbol{\mu}_j=\Breve{\mathbf{u}}_i^j, \quad \bar{\mathbf{u}}_i^j=\Breve{\mathbf{u}}_i^j, \quad j\in\mathcal{V}_i, \; i=1,\hdots,N.
\end{aligned}
\label{eqn:main_opt_aux_re_conv_proof}
\end{equation}
Note that the constraints $\mathbf{u}\in\mathcal{C}_{\mathbf{u}}\coloneqq\{\mathbf{u}: \bar{\mathbf{u}}_i^j=\Breve{\mathbf{u}}_i^j,\ i\in\mathcal{V},\ j\in\mathcal{N}_i\}$ are not dualized and are introduced only to present the convergence result. Let us define the following vectors
\begin{equation*}
    \begin{aligned}
    \boldsymbol{\mu}=&[\boldsymbol{\mu}_1^\mathsf{T},\hdots,\boldsymbol{\mu}_N^\mathsf{T}]^\mathsf{T}\\
    \mathbf{u}=&[(\mathbf{u}_1^{a_1(1)})^\mathsf{T},\hdots,(\mathbf{u}_1^{a_N(|\mathcal{V}_N|)})^\mathsf{T},\\
    &\ \ \ \hdots,(\mathbf{u}_N^{a_N(1)})^\mathsf{T},\hdots,(\mathbf{u}_N^{a_N(|\mathcal{V}_N|)})]^\mathsf{T}
    \end{aligned}
\end{equation*}
where $a_i(j)$ is the index of the $j$th neighbor of agent $i$.

The problem \eqref{eqn:main_opt_aux_re} with the constraints in \eqref{eqn:main_opt_aux_re_conv_proof} can be written as  
\begin{equation}
\begin{aligned}
&\underset{\boldsymbol{\mu},\mathbf{u}}{\min}
&& G_1(\boldsymbol{\mu})+G_2(\mathbf{u}) \\
&\text{\ s.t.}\ 
&& \boldsymbol{\mu}\in\mathcal{C}_1,\ \mathbf{u}\in\mathcal{C}_2,\ \mathbf{C}\boldsymbol{\mu}=\mathbf{u}
\end{aligned}
\label{convergence3}
\end{equation}
%$\mathbf{M}_+=\mathbf{A}_1\mathsf{T}+\mathbf{A}_2\mathsf{T}$
where $\mathbf{C}=[\mathbf{C}_1^\mathsf{T},\mathbf{C}_2^\mathsf{T}]^\mathsf{T}$, $G_2(\mathbf{u})=0$, $\mathcal{C}_1\coloneqq\mathbb{R}^M$, $\mathcal{C}_2\coloneqq\mathcal{C}_{\mathbf{u}}$,
\begin{equation*}
\begin{aligned}
\mathbf{C}_1=&\begin{bmatrix} \mathbf{C}_{11}\\ \vdots \\ \mathbf{C}_{1N} \end{bmatrix}, \quad \mathbf{C}_{1i}\coloneqq(\mathbf{1}_{|\mathcal{N}_i|}\mathbf{e}_i^\mathsf{T})\otimes\mathbf{I}_M, \ i\in\mathcal{V}\\
\mathbf{C}_2=&\begin{bmatrix} \mathbf{C}_{21}\\ \vdots \\ \mathbf{C}_{2N} \end{bmatrix}, \quad \mathbf{C}_{2i}\coloneqq
\begin{bmatrix}
\mathbf{e}_{i_i(1)}^\mathsf{T}\\ \vdots \\ \mathbf{e}_{i_i(|\mathcal{N}_i|)}^\mathsf{T}
\end{bmatrix}\otimes\mathbf{I}_M, \ i\in\mathcal{V}\\
G_1(\boldsymbol{\mu})=&\frac{1}{N}\sum_{i=1}^N\left(f^*(\boldsymbol{\mu}_i)+\boldsymbol{\mu}_i^\mathsf{T}\mathbf{b}\right)+\sum_{i=1}^Nr_i^*(-\mathbf{A}_i^\mathsf{T}\boldsymbol{\mu}_i),
\end{aligned}
\end{equation*}
and $\mathbf{e}_i$ is the $i$th vector of the canonical basis of $\mathbb{R}^M$. The convergence result relies on the following lemma.

\begin{lem}
If $\mathcal{G}$ is a connected graph, then the local optimal solution $\boldsymbol{\mu}_i^o$ at agent $i$ is equal to the optimal centralized solution of \eqref{eqn:main_opt_aux_re}, i.e., $\boldsymbol{\mu}_i^o=\boldsymbol{\mu}^o$, $\forall i\in\mathcal{V}$, where $$\boldsymbol{\mu}^o=\arg\min_{\boldsymbol{\mu}}\Bigl\{ f^*(\boldsymbol{\mu})+\boldsymbol{\mu}^\mathsf{T}\mathbf{b} + \sum_{i=1}^{N}r_i^*(-\mathbf{A}_i^\mathsf{T} \boldsymbol{\mu})\Bigr\}.$$
\end{lem}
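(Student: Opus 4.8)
The plan is to show that the consensus ADMM applied to \eqref{eqn:dual_consensus_form}, equivalently \eqref{convergence3}, converges to a point $\boldsymbol{\mu}^\star=[(\boldsymbol{\mu}_1^\star)^\mathsf{T},\dots,(\boldsymbol{\mu}_N^\star)^\mathsf{T}]^\mathsf{T}$ that is both \emph{consensual} (all blocks equal) and \emph{optimal} for the stacked objective $G_1$, and then to argue that a consensual optimizer of $G_1$ coincides blockwise with the minimizer of the aggregated single-variable problem defining $\boldsymbol{\mu}^o$. First I would invoke the standard convergence theory for two-block consensus ADMM on a connected graph: the objective $G_1(\boldsymbol{\mu})$ is the sum of $\frac1N\bigl(f^*(\boldsymbol{\mu}_i)+\boldsymbol{\mu}_i^\mathsf{T}\mathbf{b}\bigr)+r_i^*(-\mathbf{A}_i^\mathsf{T}\boldsymbol{\mu}_i)$ over $i$, which is closed, proper, and convex (as $f$ and $r_i$ are closed proper convex, so are their conjugates, and the affine precomposition and linear terms preserve this), $G_2\equiv 0$, and the coupling $\mathbf{C}\boldsymbol{\mu}=\mathbf{u}$ with $\mathbf{u}\in\mathcal{C}_{\mathbf{u}}$ exactly encodes $\boldsymbol{\mu}_i=\boldsymbol{\mu}_j$ for all edges. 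Hence, provided a saddle point exists (strong duality, already established in the proof of the theorem), the ADMM iterates $\boldsymbol{\mu}_i^{(k)}$ converge to a common limit $\boldsymbol{\mu}_i^\star$ that solves \eqref{eqn:dual_consensus_form}.

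Next I would translate "solves \eqref{eqn:dual_consensus_form}" into "solves the aggregated problem." On a connected graph, the feasible set of \eqref{eqn:dual_consensus_form} is exactly $\{\boldsymbol{\mu}:\boldsymbol{\mu}_1=\dots=\boldsymbol{\mu}_N=\boldsymbol{\nu}\}$, parametrized by a single $\boldsymbol{\nu}\in\mathbb{R}^M$. Substituting $\boldsymbol{\mu}_i=\boldsymbol{\nu}$ into $G_1$ collapses the objective to
\begin{equation*}
\frac{1}{N}\sum_{i=1}^N\bigl(f^*(\boldsymbol{\nu})+\boldsymbol{\nu}^\mathsf{T}\mathbf{b}\bigr)+\sum_{i=1}^N r_i^*(-\mathbf{A}_i^\mathsf{T}\boldsymbol{\nu})
= f^*(\boldsymbol{\nu})+\boldsymbol{\nu}^\mathsf{T}\mathbf{b}+\sum_{i=1}^N r_i^*(-\mathbf{A}_i^\mathsf{T}\boldsymbol{\nu}),
\end{equation*}
which is precisely the objective in the definition of $\boldsymbol{\mu}^o$. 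Therefore the minimizer $\boldsymbol{\nu}$ of the constrained problem equals $\boldsymbol{\mu}^o$, and since the common limit $\boldsymbol{\mu}_i^\star$ of the ADMM iterates is feasible and optimal for \eqref{eqn:dual_consensus_form}, we get $\boldsymbol{\mu}_i^o=\boldsymbol{\mu}_i^\star=\boldsymbol{\mu}^o$ for every $i\in\mathcal{V}$.

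The main obstacle is making the first step rigorous without extra assumptions: the plain two-block ADMM convergence result typically requires either a solvable subproblem at each step or a coercivity/existence-of-saddle-point hypothesis, and here $G_2\equiv 0$ with $\mathbf{C}_2$ not full column rank, so one must appeal to the consensus-ADMM variant (as in \cite{Giannakis2016,Mingyihongbook}) rather than the textbook statement. The cleanest route is to cite the established convergence of decentralized consensus ADMM for closed proper convex objectives on connected graphs and then do only the short feasible-set/objective-collapse computation above; the connectivity of $\mathcal{G}$ enters exactly at the point where "agreement on every edge" is promoted to "global agreement," which is where the lemma's hypothesis is used.
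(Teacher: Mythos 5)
Your second paragraph is, in essence, the paper's entire proof: connectivity of $\mathcal{G}$ promotes the edge-wise constraints $\boldsymbol{\mu}_i=\boldsymbol{\mu}_j$, $j\in\mathcal{V}_i$, to global agreement along arbitrary paths, the feasible set collapses to $\{\boldsymbol{\mu}_1=\cdots=\boldsymbol{\mu}_N=\boldsymbol{\nu}\}$, and substituting a common $\boldsymbol{\nu}$ makes the $\frac{1}{N}\sum_{i=1}^N\bigl(f^*(\boldsymbol{\mu}_i)+\boldsymbol{\mu}_i^\mathsf{T}\mathbf{b}\bigr)$ term telescope into $f^*(\boldsymbol{\nu})+\boldsymbol{\nu}^\mathsf{T}\mathbf{b}$, yielding exactly the aggregated objective defining $\boldsymbol{\mu}^o$. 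That part is correct and matches the paper's argument.

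Where you diverge is in scope: your first paragraph imports the convergence of the consensus ADMM iterates, which is not what this lemma asserts and is not used in the paper's proof. Here $\boldsymbol{\mu}_i^o$ denotes the $i$th block of an optimizer of the edge-constrained problem \eqref{eqn:dual_consensus_form}, not the limit of the iterates $\boldsymbol{\mu}_i^{(k)}$; the lemma is a purely static equivalence between two optimization problems, and the algorithmic statement $\lim_{k\to\infty}\boldsymbol{\mu}_i^{(k)}=\boldsymbol{\mu}^o$ is deferred to Theorem~2, where the paper invokes the two-block ADMM result of Bertsekas for \eqref{convergence3} (convexity of $G_1,G_2$, polyhedrality of $\mathcal{C}_1,\mathcal{C}_2$, and full column rank of $\mathbf{C}$). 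The obstacle you flag at the end --- justifying ADMM convergence without extra hypotheses --- is therefore a real concern, but it belongs to Theorem~2, not to this lemma; dropping your first paragraph leaves a complete and correct proof that coincides with the paper's.
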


\begin{proof}
Let $i$ and $i'$ be arbitrary agents in $\mathcal{G}$ and $p(i,i'):i,i_1,i_2,\hdots,i_n,i'$ an arbitrary path on $\mathcal{G}$ that connects $i$ and $i'$. Since the adjacent agents in $p(i,i')$ are neighbors, we have $$\boldsymbol{\mu}_i=\boldsymbol{\mu}_{i_1}=\boldsymbol{\mu}_{i_2}=\hdots=\boldsymbol{\mu}_{i_n}=\boldsymbol{\mu}_{i'},$$ which imply $$\boldsymbol{\mu}_i=\boldsymbol{\mu}_{i'}.$$ Since $\mathcal{G}$ is connected and the path is arbitrary, the local constraints $\boldsymbol{\mu}_i=\boldsymbol{\mu}_{i'}$ can be removed and replaced by the common constraint $\boldsymbol{\mu}_i=\boldsymbol{\mu}$. Hence, $\boldsymbol{\mu}_i^o=\boldsymbol{\mu}^o$ $\forall i\in\mathcal{V}$ where $$\boldsymbol{\mu}^o=\arg\min_{\boldsymbol{\mu}}\Bigl\{ f^*(\boldsymbol{\mu})+\boldsymbol{\mu}^\mathsf{T}\mathbf{b} + \sum_{i=1}^{N}r_i^*(-\mathbf{A}_i^\mathsf{T} \boldsymbol{\mu})\Bigr\}.$$  
\end{proof}

We can now prove the convergence of the proposed algorithm as per the following theorem.

\begin{theo}
If $\mathcal{G}$ is a connected graph, then the proposed algorithm converges to the optimal centralized solution, i.e., 
\begin{equation}
    \label{first_eq_conv_proof}
    \lim_{k\rightarrow\infty}\boldsymbol{\mu}_i^{(k)}=\boldsymbol{\mu}^o, \forall i\in\mathcal{V}.
\end{equation}
\end{theo}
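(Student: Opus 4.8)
The plan is to recognize that, given the inner BCD loop is solved exactly (as argued above in this section, so that \eqref{eqn:ADMM_dual_org_1} indeed holds), the outer-loop recursions \eqref{eqn:ADMM_dual_org_1}--\eqref{eqn:ADMM_dual_org_2} are precisely the standard edge-based consensus ADMM applied to the reformulated problem \eqref{convergence3}; to invoke a generic ADMM convergence guarantee so that $\boldsymbol{\mu}^{(k)}$ approaches a primal-optimal point of \eqref{convergence3}; and then to use Lemma~1 to identify that point with the centralized solution $\boldsymbol{\mu}^o$. First I would check that \eqref{convergence3} meets the hypotheses of the two-block ADMM convergence theorem \cite{Boyd2014,Giannakis2016}: $G_1$ and $G_2$ are closed, proper, and convex, since $f^*$ and each $r_i^*$ are conjugates of proper convex lower semi-continuous functions (hence closed proper convex), composition with the affine map $\boldsymbol{\mu}_i\mapsto-\mathbf{A}_i^\mathsf{T}\boldsymbol{\mu}_i$ and addition of the linear term $\boldsymbol{\mu}_i^\mathsf{T}\mathbf{b}/N$ preserve these properties, and $G_2\equiv 0$ on the closed convex set $\mathcal{C}_2$ is closed proper convex; the feasible set is nonempty and a primal-optimal point exists because strong duality holds for \eqref{eqn:main_opt_aux} (Slater's condition, as in Theorem~1), so the dual \eqref{eqn:dual_consensus_form} is solvable and the augmented Lagrangian of \eqref{convergence3} has a saddle point. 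I would also note that the elimination of $\{\tilde{\mathbf{v}}_i^j\}$ and $\{\mathbf{u}_i^j\}$ performed before \eqref{eqn:ADMM_dual_org_1} is the standard reduction \cite{Giannakis2016}, so that running the reduced recursions coincides with running consensus ADMM on \eqref{convergence3} under the block splitting $\boldsymbol{\mu}$ versus $\mathbf{u}$.

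Next, invoking the ADMM convergence theorem yields that the primal residual $\mathbf{C}\boldsymbol{\mu}^{(k)}-\mathbf{u}^{(k)}\to\mathbf{0}$, the objective value converges to the optimum, and the iterates $(\boldsymbol{\mu}^{(k)},\mathbf{v}^{(k)})$ converge to a primal-dual optimal pair of \eqref{convergence3}; in particular $\boldsymbol{\mu}^{(k)}\to\boldsymbol{\mu}^\star$ for some primal-optimal $\boldsymbol{\mu}^\star$. Asymptotic feasibility $\mathbf{C}\boldsymbol{\mu}^\star=\mathbf{u}^\star$ together with $\mathbf{u}^\star\in\mathcal{C}_2$ forces $\boldsymbol{\mu}_i^\star=\boldsymbol{\mu}_j^\star$ for every edge $(i,j)\in\mathcal{E}$, and since $\mathcal{G}$ is connected this propagates to $\boldsymbol{\mu}_1^\star=\cdots=\boldsymbol{\mu}_N^\star$.

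Then I would identify $\boldsymbol{\mu}^\star$ using Lemma~1. Restricted to the consensus subspace, $G_1(\boldsymbol{\mu})$ equals $f^*(\boldsymbol{\mu})+\boldsymbol{\mu}^\mathsf{T}\mathbf{b}+\sum_{i=1}^N r_i^*(-\mathbf{A}_i^\mathsf{T}\boldsymbol{\mu})$, because the factor $1/N$ multiplies $N$ identical copies of $f^*(\boldsymbol{\mu})+\boldsymbol{\mu}^\mathsf{T}\mathbf{b}$. Hence, by Lemma~1, the common block of any consensus minimizer of \eqref{convergence3} is exactly $\boldsymbol{\mu}^o$, so $\boldsymbol{\mu}_i^\star=\boldsymbol{\mu}^o$ for all $i\in\mathcal{V}$. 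Combined with the iterate convergence of the previous step, this gives $\lim_{k\to\infty}\boldsymbol{\mu}_i^{(k)}=\boldsymbol{\mu}^o$ for all $i\in\mathcal{V}$, as claimed.

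The step I expect to be the main obstacle is the claim that the \emph{iterate} sequence converges, not merely the objective values and the residuals: since $G_1$ need not be strictly convex, a generic ADMM theorem only guarantees value and residual convergence together with boundedness of $\{\boldsymbol{\mu}^{(k)}\}$. I would close this gap by one of two routes: cite the convergence statement already established for this exact decentralized recursion in \cite{Giannakis2016}; or use a fixed-point argument, namely that the pair $(\boldsymbol{\mu}^{(k)},\mathbf{v}^{(k)})$ is produced by a firmly nonexpansive (averaged) operator whose fixed-point set is the nonempty set of saddle points of the Lagrangian of \eqref{convergence3}, so the sequence is Fej\'er-monotone with respect to that set and therefore converges to one of its points; projecting onto the $\boldsymbol{\mu}$-component and applying Lemma~1 completes the argument. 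The secondary point of verifying that the eliminated-multiplier recursion is genuinely equivalent to full consensus ADMM on \eqref{convergence3} is routine and is already referenced.
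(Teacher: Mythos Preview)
Your proposal is correct and follows essentially the same strategy as the paper: recast the consensus dual as \eqref{convergence3}, verify the hypotheses of a known two-block ADMM convergence result, conclude iterate convergence of $\boldsymbol{\mu}^{(k)}$, and then invoke Lemma~1 to identify the limit with $\boldsymbol{\mu}^o$. The only notable difference is the choice of reference: the paper appeals to \cite[Proposition~4.2, p.~256]{Bertsekas99}, whose specific hypotheses (convex $G_1,G_2$; nonempty \emph{polyhedral} $\mathcal{C}_1,\mathcal{C}_2$; and $\mathbf{C}$ having full column rank so that $\mathbf{C}^\mathsf{T}\mathbf{C}$ is invertible) directly yield convergence of the primal iterates, so the obstacle you flag regarding iterate-versus-value convergence and your two proposed workarounds are not needed under that reference.
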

\begin{proof}
Thanks to Lemma 1, we only need to prove that $$\lim_{k\rightarrow\infty}\boldsymbol{\mu}_i^{(k)}=\boldsymbol{\mu}_i^o.$$ For this purpose, we observe that \eqref{convergence3} is in the same form as \cite[eq. 4.77, p. 255]{Bertsekas99}. Furthermore, the following assumptions are satisfied:
\begin{itemize}
    \item $G_1(\cdot)$ and $G_2(\cdot)$ are convex functions;
    \item $\mathcal{C}_1$ and $\mathcal{C}_2$ are nonempty polyhedral sets;
    \item $\mathbf{C}$ is full column rank, hence, $\mathbf{C}^\mathsf{T}\mathbf{C}$ is invertible.
\end{itemize}
Therefore, due to \cite[Proposition 4.2, p. 256]{Bertsekas99}, we have
$$\lim_{k\rightarrow\infty}\boldsymbol{\mu}_i^{(k)}=\boldsymbol{\mu}^o, \forall i\in\mathcal{V}.$$
\end{proof}

\section{Simulations}

\begin{figure*}[htp]%0.87
  \centering
  \subfigure[different values of $P_i$]{\includegraphics[scale=1.00]{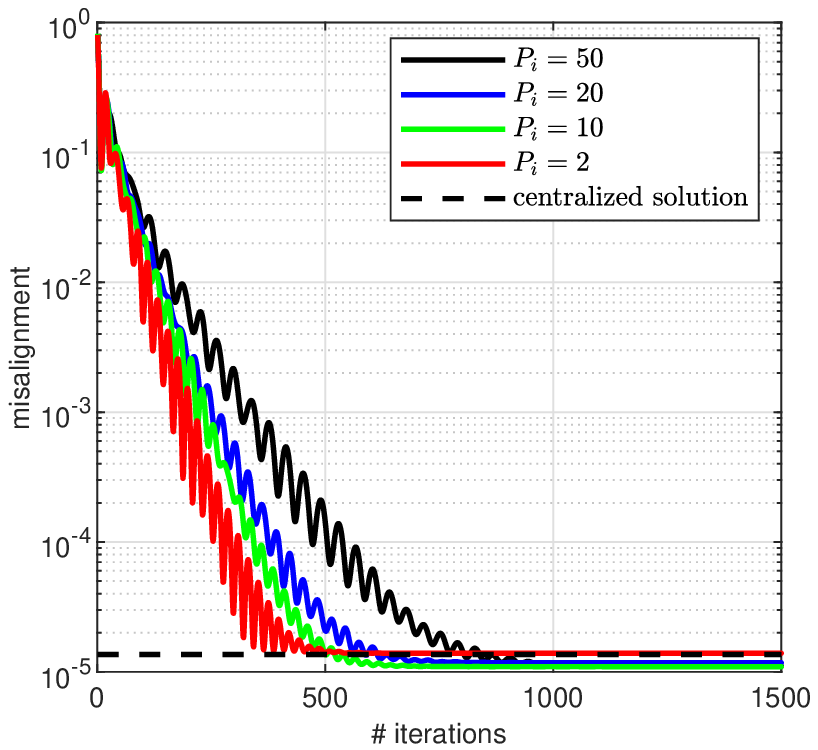}\label{fig:pi}}\quad
  \subfigure[different values of $M$]{\includegraphics[scale=1.00]{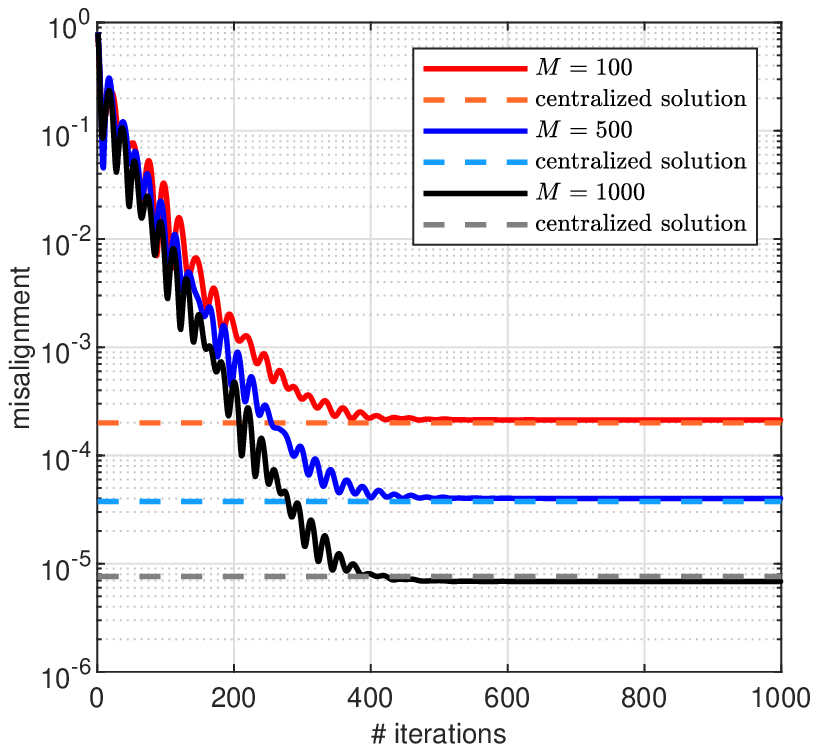}\label{fig:m}}
\caption{The misalignment of the proposed algorithm solving the distributed elastic-net regression problem with different values of $P_i$ and $M$.}  
\end{figure*}

\begin{figure*}[htp]%0.87
  \centering
  \subfigure[different values of $N$]{\includegraphics[scale=1.00]{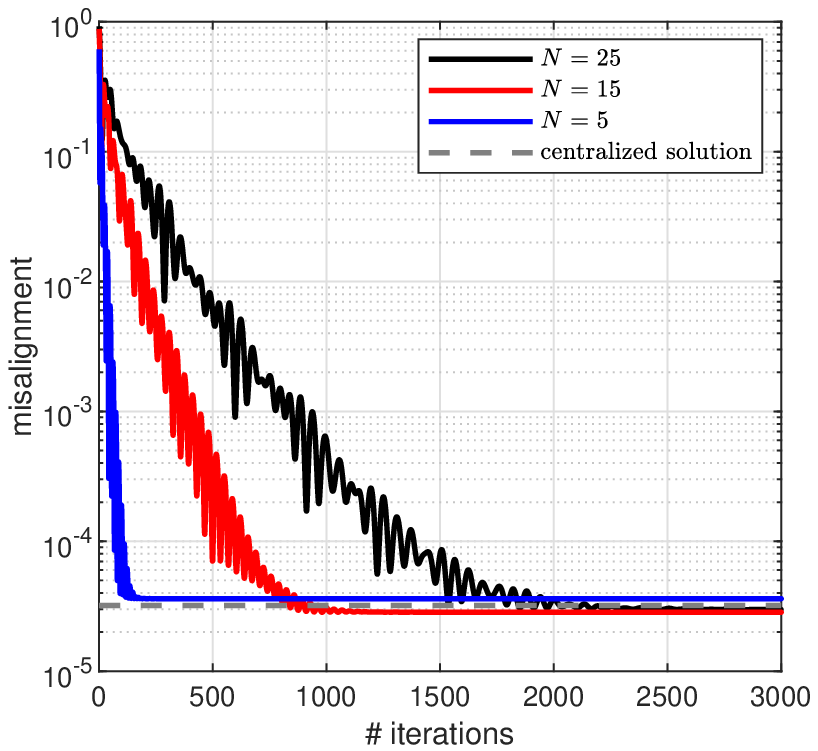}\label{fig:n}}\quad
  \subfigure[different topologies]{\includegraphics[scale=1.00]{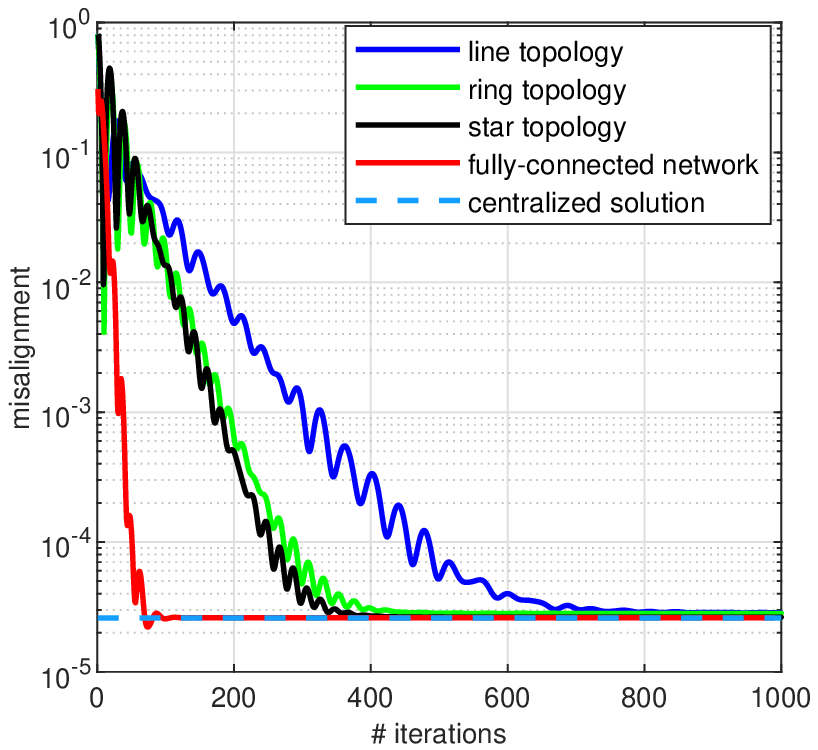}\label{fig:net}}
\caption{The misalignment of the proposed algorithm solving the distributed elastic-net regression problem with different values of $N$ and different topologies.}  
\end{figure*}

\begin{figure*}[htp]%0.87
  \centering
  \subfigure[Ridge regression with $N=10$, $M=50$, and $P_i=2$.]{\includegraphics[scale=1.00]{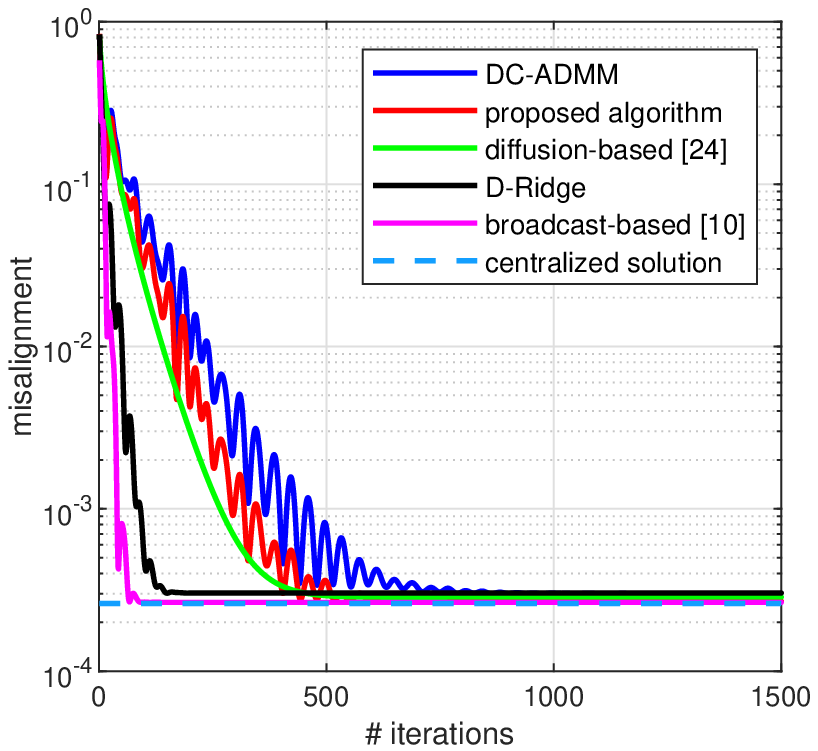}\label{fig:m50}}\quad
  \subfigure[Ridge regression with $N=10$, $M=200$, and $P_i=2$.]{\includegraphics[scale=1.00]{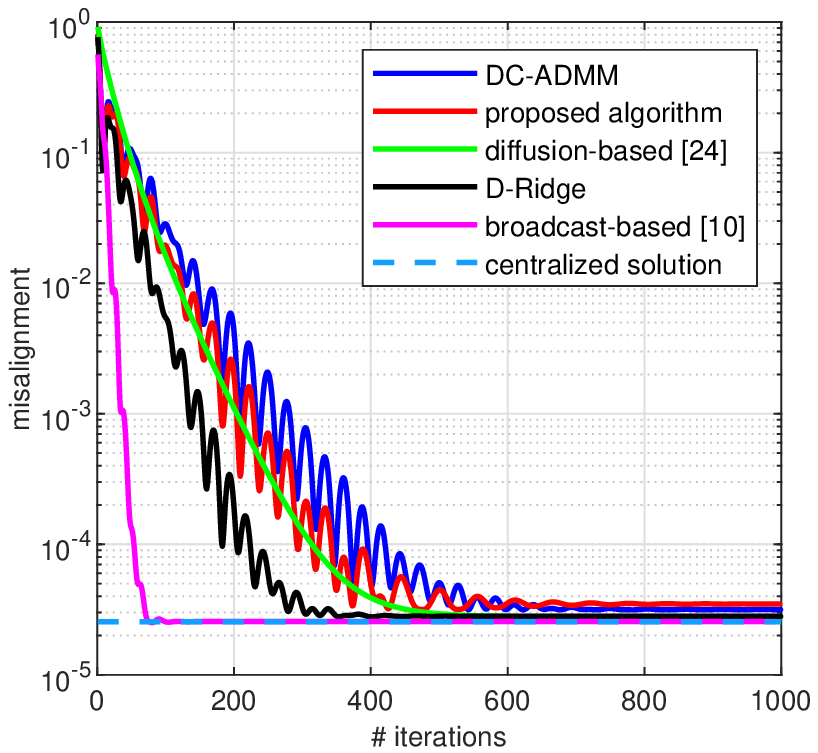}\label{fig:m200}}
\caption{The misalignment of the proposed algorithm and other considered algorithms for the ridge regression problems in different scenarios.}\label{fig:r1}
\end{figure*}

\begin{figure*}[htp]%0.87
  \centering
  \subfigure[Ridge regression with $N=20$, $M=200$, and $P_i=2$]{\includegraphics[scale=1.00]{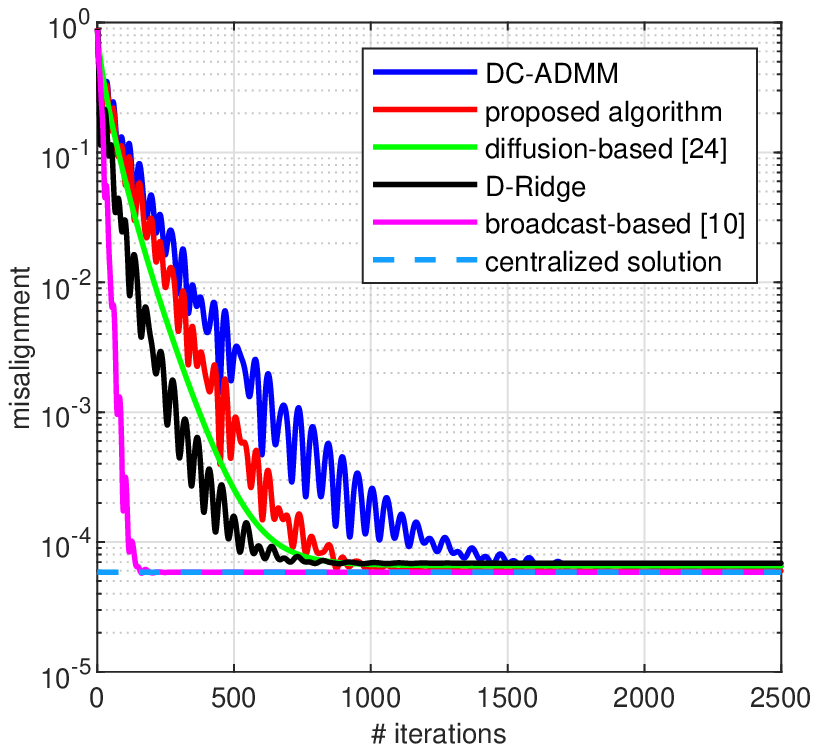}\label{fig:n20}}\quad
  \subfigure[Ridge regression with $N=10$, $M=200$, and $P_i=10$]{\includegraphics[scale=1.00]{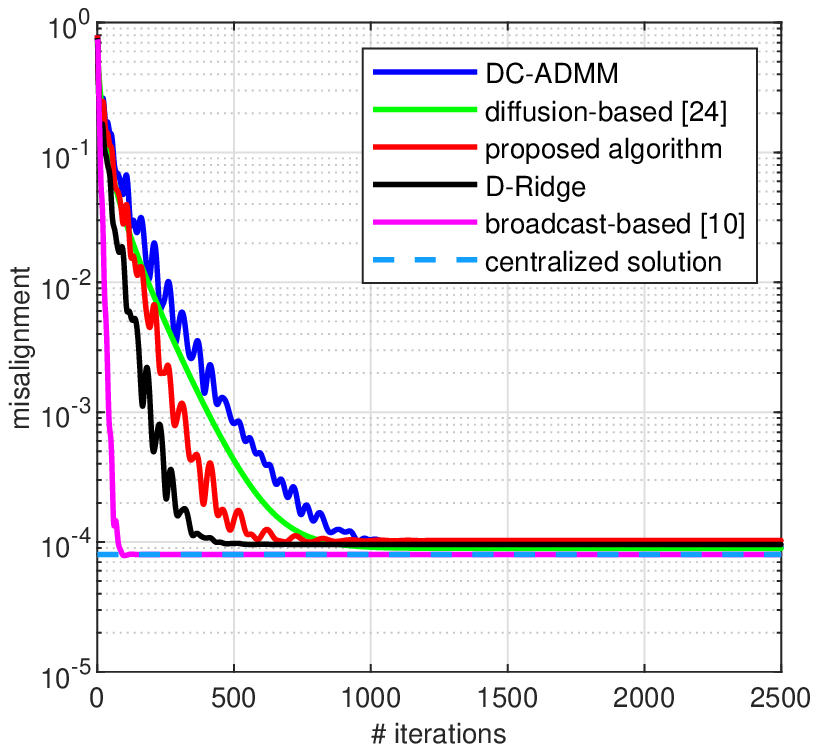}\label{fig:n10}}
\caption{The misalignment of the proposed algorithm and other considered algorithms for the ridge regression problems in different scenarios.}\label{fig:r2}
\end{figure*}

\begin{figure*}[htp]%0.87
  \centering
  \subfigure[Lasso regression with $N=10$, $M=50$, and $P_i=2$]{\includegraphics[scale=1.00]{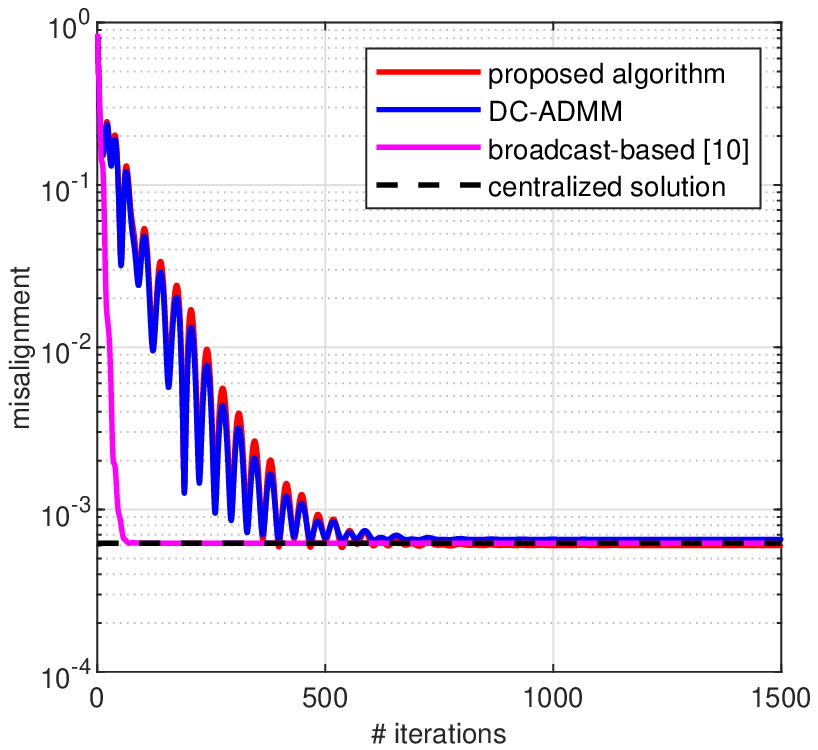}\label{fig:lm50}}\quad
  \subfigure[Lasso regression with $N=10$, $M=200$, and $P_i=2$]{\includegraphics[scale=1.00]{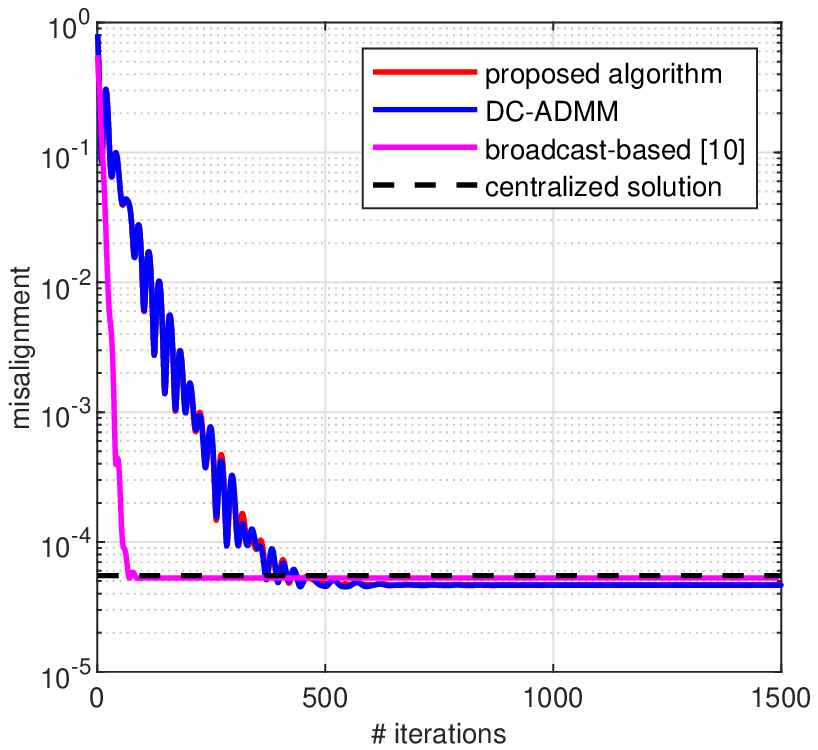}\label{fig:lm200}}
\caption{The misalignment of the proposed algorithm and other considered algorithms for the lasso regression problems in different scenarios.}\label{fig:l1}
\end{figure*}

\begin{figure*}[htp]%0.87
  \centering
  \subfigure[Lasso regression with $N=20$, $M=200$, and $P_i=2$]{\includegraphics[scale=1.00]{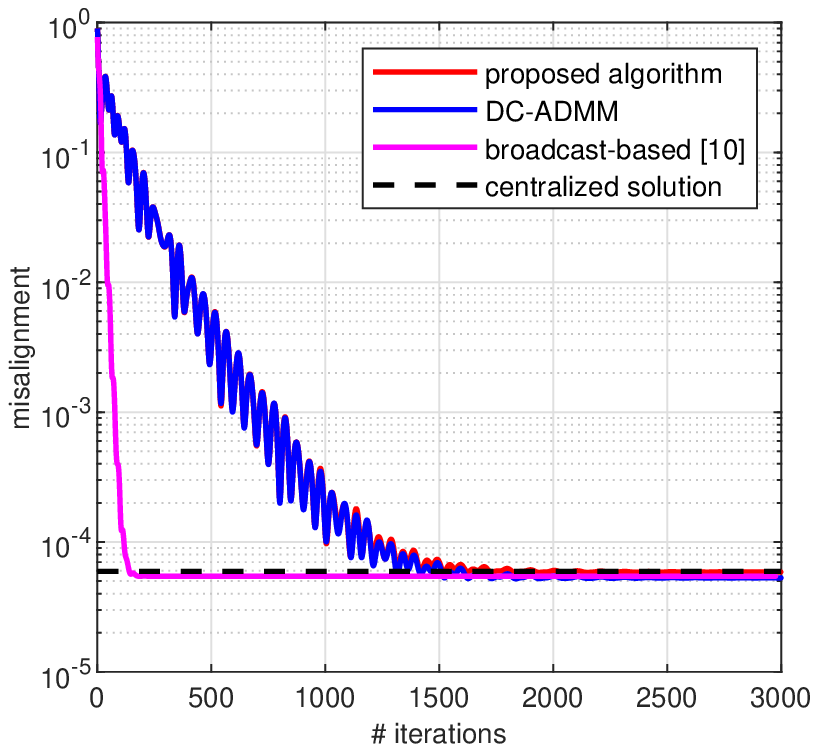}\label{fig:ln20}}\quad
  \subfigure[Lasso regression with $N=10$, $M=200$, and $P_i=10$]{\includegraphics[scale=1.00]{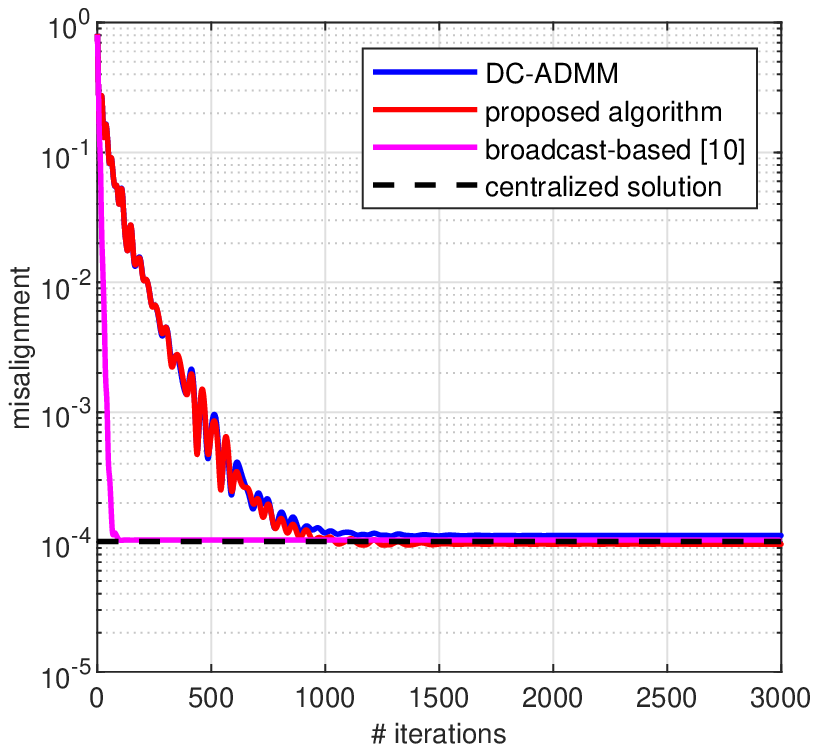}\label{fig:ln10}}
\caption{The misalignment of the proposed algorithm and other considered algorithms for the lasso regression problems in different scenarios.}\label{fig:l2}
\end{figure*}

In this section, we present some simulation results to evaluate the performance of the proposed algorithm. We first assess the proposed algorithm considering a distributed elastic-net regression problem with different numbers of local features, samples, and agents as well as different network topologies. Subsequently, we benchmark the proposed algorithm against the most relevant existing algorithms considering distributed ridge and lasso regression problems.

\subsection{Distributed Elastic-Net Regression}\label{elastic-net}

To evaluate the performance of the proposed algorithm in different scenarios, we consider the elastic-net regression problem. The calculation of the conjugate function for the objective function corresponding to this problem is practically infeasible. In the distributed setting, we solve the elastic-net regression problem by considering \begin{equation}
\begin{aligned}
f(\mathbf{x}_i) &= \norm{\sum_{i=1}^{N}\mathbf{A}_i\mathbf{x}_i-\mathbf{b}}^2\\
r_i(\mathbf{x}_i) &= \eta_1 \norm{\mathbf{x}_i}_1 + \eta_2 \norm{\mathbf{x}_i}^2
\label{eqn:elasticnet_sum_sim}
\end{aligned}
\end{equation}
where $\eta_1\in\mathbb{R}^+$ and $\eta_2\in\mathbb{R}^+$ are the regularization parameters. We calculate the response vector $\mathbf{b}$ as 
\begin{equation}
\mathbf{b}=\mathbf{A}\boldsymbol{\omega}+\boldsymbol{\psi}
\label{eq_resp_sim}
\end{equation}
where $\boldsymbol{\omega}\in\mathbb{R}^{P}$ and $\boldsymbol{\psi}\in\mathbb{R}^{M}$ are independently drawn from the multivariate normal distributions $\mathcal{N}(\mathbf{0},\mathbf{I}_{P})$ and $\mathcal{N}(\mathbf{0},0.1\mathbf{I}_M)$, respectively. We set the regularization parameters to $\eta_1=1$, $\eta_2=1$ and the penalty parameter to $\rho=2$. We use two iterations in the inner-loop BCD algorithm. We obtain the results by averaging over $100$ independent trials while considering a multi-agent network with a random topology where each agent links to three other agents on average. We evaluate the performance of the proposed algorithm using the misalignment metric that is defined as
$$\frac{\norm{\mathbf{x}^d(k)-\boldsymbol{\omega}}^2}{\norm{\boldsymbol{\omega}}^2}$$
where
$$\mathbf{x}^d(k) {=}\left[\mathbf{x}_1^{(k)\mathsf{T}},\hdots,\mathbf{x}_N^{(k)\mathsf{T}} \right]^{\mathsf{T}}$$
and $\mathbf{x}_i^{(k)}$ $\forall i\in \mathcal{V}$ denotes the local estimate at agent $i$.

In Fig.~\ref{fig:pi}, we plot the misalignment of the proposed algorithm versus its outer-loop iteration index for different values of $P_i$, i.e., $P_i=2$, $P_i=10$, $P_i=20$, and $P_i=50$ while $M=800$, $M=1000$, $M=1100$, and $M=1500$, respectively.
%The number of samples is $M=800$, $M=1000$, $M=1100$, and $M=1500$, respectively. 
Fig.~\ref{fig:pi} shows that the proposed algorithm converges faster as the number of local features $P_i$ decreases. In Fig.~\ref{fig:m}, we set $P_i=2$ and use the same topology as in Fig.\ref{fig:pi} but consider different values of $M$. Fig.~\ref{fig:m} shows that the proposed algorithm achieves higher accuracy as the number of samples $M$ increases. Note that we include the misalignment of the centralized optimal solution in all figures.

In Fig.~\ref{fig:n}, we consider different values of $N$ while $P_i=2$, $M=500$, and the network topology is arbitrary but with an average node degree of three. Fig.~\ref{fig:n} shows that the proposed algorithm converges faster as the number of agents $N$ decreases. In Fig.~\ref{fig:net}, we evaluate the proposed algorithm by setting $N=10$, $P_i=2$, $M=500$ and considering four different common simple topologies, i.e.,
\begin{itemize}
    \item line: the agents are connected one after the other, hence, $|\mathcal{N}_i|=2$ for $1<i<N$ and $|\mathcal{N}_i|=1$ for $i=1$ and $i=N$
    \item ring: $|\mathcal{N}_i|=2$ for each $i\in\mathcal{V}$
    \item star: $|\mathcal{N}_i|=N-1$ for $i=1$ and $|\mathcal{N}_i|=1$ for $i=2,\hdots,N$
    \item fully-connected: each agent in the network is connected to all the other agents.
\end{itemize}
In Fig.~\ref{fig:net}, we observe that the proposed algorithm converges faster as the average number of links per agent increases, i.e., the average connectivity of the network increases.

\subsection{Distributed Ridge Regression}\label{ridge}

Considering a distributed ridge regression problem, in Figs.~\ref{fig:r1} and~\ref{fig:r2}, we benchmark the proposed algorithm against some existing baseline algorithms, namely, the broadcast-based algorithm for learning with distributed features proposed in \cite{Boyd2010}, the dual consensus ADMM (DC-ADMM) algorithm of \cite{dc_admm}, the consensus-based algorithm for ridge regression (D-Ridge) introduced in \cite{Grattonasilomar2018}, and the diffusion-based algorithm of \cite{Arablouei2015main}. The algorithms proposed in \cite{Grattonasilomar2018,Arablouei2015main} are only for solving the ridge regression problem. Here, we solve the problem \eqref{eqn:canon_objective} with the objective function \eqref{eqn:elasticnet_sum_sim} and set the $i$th agent's regularizer to
$$r_i(\mathbf{x}_i)=\eta \norm{\mathbf{x}_i}^2$$
where $\eta\in\mathbb{R}^+$ is the regularization parameter.

We calculate the response vector $\mathbf{b}$ as in \eqref{eq_resp_sim}. As per \cite{Grattonasilomar2018,Arablouei2015main}, we set the regularization parameter to $\eta=0.001$. We also set the number of inner-loop BCD iterations of the proposed algorithm to $2$ and obtain the results by averaging over $100$ independent trials. In Fig.~\ref{fig:m50}, we set $N=10$, $M=50$, and $P_i=2$. In Fig.~\ref{fig:m200}, the parameter setting is the same as Fig.~\ref{fig:m50} except for the number of samples $M$ being larger, i.e., $M=200$. In Fig.~\ref{fig:n20}, we keep $M=200$ and set the number of agents to $N=20$. In Fig.~\ref{fig:n10}, we set $N$ and $M$ to $10$ and $200$, respectively, while $P_i=10$ $\forall i\in\mathcal{V}$.

We observe in Figs.~\ref{fig:r1} and~\ref{fig:r2} that the proposed algorithm outperforms the DC-ADMM algorithm. It also perform competitively in comparison with the algorithms of \cite{Grattonasilomar2018,Arablouei2015main}, which are specifically tailored to the ridge regression problem. The superior performance of the broadcast-based algorithm of~\cite{Boyd2010} is due to its centralized processing. We include it here only as a reference.

\subsection{Distributed Lasso Regression}\label{lasso}

In Figs.~\ref{fig:l1} and~\ref{fig:l2}, we compare the performance of the proposed algorithm with that of the broadcast-based algorithm for learning with distributed features proposed in \cite{Boyd2010} and the DC-ADMM algorithm of \cite{dc_admm} considering a distributed lasso problem. Hence, we solve the problem \eqref{eqn:canon_objective} with the objective function \eqref{eqn:elasticnet_sum_sim} and set the $i$th agent's regularizer to
$$r_i(\mathbf{x}_i)=\eta \norm{\mathbf{x}_i}_1$$
where $\eta\in\mathbb{R}^+$ is the regularization parameter.

We calculate the response vector $\mathbf{b}$ as in \eqref{eq_resp_sim}. As per \cite{Grattonasilomar2018,Arablouei2015main}, we set the regularization parameter to $\eta=0.001$. We also set the number of inner-loop BCD iterations of the proposed algorithm to $2$ and obtain the results by averaging over $100$ independent trials. In Fig.~\ref{fig:lm50}, we set $N=10$, $M=50$, and $P_i=2$. In Fig.~\ref{fig:lm200}, the parameter setting is the same as Fig.~\ref{fig:lm50} except for the number of samples $M$ being larger, i.e., $M=200$. In Fig.~\ref{fig:ln20}, we keep $M=200$ and set the number of agents to $N=20$. In Fig.~\ref{fig:ln10}, we set $N$ and $M$ to $10$ and $200$, respectively, while $P_i=10$ $\forall i\in\mathcal{V}$.

We observe in Figs.~\ref{fig:l1} and~\ref{fig:l2} that the proposed algorithm performs very similar to the DC-ADMM algorithm as the learning curves of the two algorithms almost overlap. Again, the superior performance of the broadcast-based algorithm of~\cite{Boyd2010} is due to its centralized processing.

\subsection{Discussion}

The main advantage of the proposed algorithm is in its ability to solve generic feature-partitioned distributed optimization problems without resorting to any conjugate function even when the objective function is non-smooth. This is unique to our proposed algorithm and, to the best of our knowledge, there is no existing algorithm with the same utility. That is why we do not compare the proposed algorithm with any other existing algorithm in Section~\ref{elastic-net} where the problem at hand is feature-distributed elastic-net regression. The existing algorithms for feature-partitioned distributed optimization such as DC-ADMM require the conjugate function of the objective or regularization function. In the case of elastic-net regression, calculating the conjugate function is impracticable.

The simulation results in Sections~\ref{ridge} and~\ref{lasso} are to provide a comparative study of the performance of the proposed algorithm with respect to the other most relevant existing algorithms. As evident by the results, the proposed algorithm's performance in solving the distributed ridge and lasso regression problems is on par with those of its state-of-the-art competitors, even those that have specifically been design to solve these problems.

As seen in the figures, in all simulations, the network-wide average estimate of the proposed algorithm converges to the corresponding optimal centralized solution. Although not shown here for conciseness, we have observed that the estimates at all agents also converge to the optimal solution in all the experiments corroborating our theoretical findings in Section~\ref{conv-anal}.

In all simulations, we utilize only two BCD iterations with no extra inter-agent communication overhead. Therefore, the computational complexity and communication requirements of the proposed algorithm are of the same order as those of the related existing algorithms such as DC-ADMM. Indeed, we did not observe any significant difference in the per-iteration run time of the proposed and DC-ADMM algorithms.

%In Fig. 4a), each agent holds the data of two features, i.e., $P_i=2$, and $M=50$ as in \cite{Grattonasilomar2018,Arablouei2015main}. The convergence of the diffusion-based algorithm of \cite{Arablouei2015main} depends upon a step-size parameter. In Fig. 4a), We set the value of this step-size to $0.02$, which corresponds to the fastest convergence rate of this algorithm in the simulation setup of \cite{Arablouei2015main} and \cite{Grattonasilomar2018}. 

%In Fig. 3b), we test the proposed algorithm on the lasso problem and therefore we benchmark it only against the DC-ADMM algorithm and the broadcast-based algorithm in \cite{Boyd2010}. In the lasso regression scenario, we solve the problem \eqref{eqn:canon_objective} with the same objective $f(\cdot)$ as \eqref{eqn:elasticnet_sum_sim} and set the $i$th agent's regularizer to $$r_i(\mathbf{x}_i)=\eta \norm{\mathbf{x}_i}_1.$$
%The simulation setup is the same as that of the ridge regression problem in Fig. 3a). As in the ridge regression scenario illustrated in Fig. 3a), the proposed algorithm performs similar to the DC-ADMM algorithm.

\section{Conclusion}

We proposed a distributed algorithm for learning with non-smooth objective functions under distributed features. We reformulated the considered non-separable problem into a dual form that is separable and solved it via the ADMM. Subsequently, we devised an approach based on articulating the dual of the dual problem to overcome the challenge of computing the involved conjugate functions, which may be hard or even infeasible with some objective functions. We employed the BCD algorithm to solve the dual of the dual problem. Therefore, unlike most existing algorithms for solving learning problems with feature partitioning, the proposed algorithm does not require the explicit calculation of any conjugate of the objective function. We verified the convergence of the proposed algorithm to the optimal solution through both theoretical analysis and numerical simulations.

\bibliographystyle{IEEEtran}
\bibliography{IEEEabrv,references}
\end{document}